\newtheorem{lemma}{Lemma}
\newtheorem{theorem}{Theorem}[section]
\newtheorem{definition}[theorem]{Definition}
\newtheorem{proposition}[theorem]{Proposition}
\newtheorem{remark}[theorem]{Remark}
\newcommand{\N}{\mathbb{N}}
\newcommand{\R}{\mathbb{R}}
\DeclareMathOperator*{\supp}{supp}
\DeclareMathOperator*{\argmax}{\arg\!\max}
\tikzstyle myBG=[line width=3pt,opacity=1]
\newcommand{\drawArrowwithBG}[2]
{
  \draw[->, white,myBG]  (#1) edge (#2);
  \draw[->, black,very thick] (#1) edge (#2);
}
\newcommand{\drawDashedArrowwithBG}[2]
{
  \draw[dashed, ->, white,myBG]  (#1) edge (#2);
  \draw[dashed, ->, black,very thick] (#1) edge (#2);
}
\newcommand{\drawLinewithBG}[2]
{
  \draw[white,myBG]  (#1) -- (#2);
  \draw[black,very thick] (#1) -- (#2);
}
\newcommand{\drawPolarLinewithBG}[2]
{
  \draw[white,myBG]  (#1) -- (#2);
  \draw[black,very thick] (#1) -- (#2);
}
\newcommand{\Montriangle}[4]
{
    \draw[fill=white, draw = black, line width=#3*5pt] (#1, #2) -- ++(120 + #4 : #3) -- ++(240 + #4 : #3) -- ++(#4 : #3) -- cycle;
}
\begin{document}

\title{Variable-Length Coding for Zero-Error Channel Capacity}

\author{\IEEEauthorblockN{Nicolas Charpenay\IEEEauthorrefmark{1} and Ma\"{e}l Le Treust\IEEEauthorrefmark{2}}\\
\IEEEauthorblockA{
ETIS UMR 8051, Universit\'e Paris Seine, Universit\'e Cergy-Pontoise, ENSEA, CNRS,\\
6, avenue du Ponceau, 95014 Cergy-Pontoise CEDEX, FRANCE\\
Email: \{nicolas.charpenay ; mael.le-treust\}@ensea.fr}\\
\thanks{\IEEEauthorrefmark{1} Nicolas Charpenay gratefully acknowledges financial support from ENS Paris-Saclay}
\thanks{\IEEEauthorrefmark{2} Ma\"el Le Treust gratefully acknowledges financial support from INS2I CNRS, DIM-RFSI, SRV ENSEA, UFR-ST UCP, The Paris Seine Initiative and IEA Cergy-Pontoise. This research has been conducted as part of the project Labex MME-DII (ANR11-LBX-0023-01).}}


\maketitle

\begin{abstract}

The zero-error channel capacity is the maximum asymptotic rate that can be reached with error probability exactly zero, instead of a vanishing error probability. The nature of this problem, essentially combinatorial rather than probabilistic, has led to various researches both in Information Theory and Combinatorics. However, the zero-error capacity is still an open problem, for example the capacity of the noisy-typewriter channel with $7$ letters is unknown. In this article, we propose a new approach to construct optimal zero-error codes, based on the concatenation of words of variable-length, taken from a generator set. Three zero-error variable-length coding schemes, referred to as ``variable-length coding'', ``intermingled coding'' and ``automata-based coding'', are under study. We characterize their asymptotic performances via linear difference equations, in terms of simple properties of the generator set, e.g. the roots of the characteristic polynomial, the spectral radius of an adjacency matrix, the inverse of the convergence radius of a generator series. For a specific example, we construct an  ``intermingled'' coding scheme that achieves asymptotically the zero-error capacity. 

\end{abstract}

\begin{IEEEkeywords}
Zero-Error Information Theory, Channel Coding, Analytic Combinatorics, Graph Theory, Combinatorics on Words, Linear Difference Equation, Automata
\end{IEEEkeywords}

\section{Introduction}

In \cite{shannon56}, Shannon investigates the zero-error information transmission by considering codes that must allow for a correct decoding with probability one, instead of an asymptotic probability one. This subtle difference radically changes the nature of the problem, as the exact values of the non-null transition probabilities of the channel do not appear anymore. Shannon defined the \emph{zero-error capacity} of a channel as the maximum asymptotic rate that can be reached with error probability exactly zero. The characterization of the zero-error capacity of an arbitrary channel is a wide open problem, that shares deep connections with Graph Theory. Equivalently, the zero-error capacity is the asymptotic limit of the independence number of iterated strong products of channel graphs. This problem inspired Berge's notion of perfect graphs \cite[Chap 16]{berge73}, for which the zero-error capacity is given by the one-shot independence number \cite[Theorem 4.18]{grotschel1984polynomial}. Over time, this open problem has attracted a lot of attention  both from the Information Theory and Combinatorics communities, see \cite[Chap. 27]{klavzar2011handbook}.

\subsection{Motivations}
The zero-error capacity problem has several applications. In data center storage systems, the automatic treatment a large amount of data imposes reliability constraints which do not support any positive probability of error, even if arbitrarily small. In \cite{Kovacevic_TCOM19}, {Kova{\v c}evi{\'c}} investigated the zero-error capacity, for the duplication channels in \emph{DNA-based data storage systems}, and in \cite{KovacevicPopovski_IT14,KovacevicStojakovicTan_IT17}, for the timing channels in \emph{molecular communication}. 

The analysis of the zero-error capacity problem provides an important insight into the exponential decrease of the error probability for the classical channel coding problem, see the discussion in \cite[pp. 203]{csiszar2011information}. In particular, it shares deep connections with the channel coding problem in the \emph{finite-blocklength regime} \cite{polyanskiy2010channel}, of interests for the transmission of short packet in \emph{IoT networks}. For example, the error-exponent goes to infinity when the coding rate approaches the largest rate of a zero-error code \cite{Dalai_ISIT16,DalaiPolyanskiy_IT18}. 


When considering channel with memory \cite{CohenFachiniKorner_IT16}, the zero-error capacity problem covers the problems of \emph{coding for error-correction} \cite{AhlswedeCaiZhang_IT98}. In \cite{Dalai_ISIT14,Dalai_IT15}, Dalai uses the zero-error capacity tools in order to bound on the minimum distance of codes. In \cite{BoseElariefTallini_IT18}, Bose et al. describe codes which achieve zero error capacities in limited magnitude error channels.

\subsection{Tools and bounds for zero-error capacity}

In \cite{shannon56}, Shannon gives a sufficient condition to determine the zero-error capacity, based on the existence of an adjacency-reducing mapping for the channel graph. This condition works well to almost all channels with 5 symbols or less, and boils down to specific instances of tesselation covers, studied by Abreu et al. in \cite{abreu18}. In \cite{lovasz79}, Lovasz gives bounds on the zero-error capacity with the well-known $\theta$ function, and determines the capacity for the specific class of auto-complementary and vertex-transitive channel graphs. The $\theta$ number possesses the property of being multiplicative with respect to the strong product. Another graph invariant, called the Rosenfeld number \cite{rosenfeld67}, also presents this property. In \cite{hales73}, Hales studies this number and shows that Shannon's adjacency-reducing mappings condition in \cite{shannon56} is not necessary. In \cite{haemers78}, Haemers proves another upper bound on the zero-error capacity of a channel, based on the rank of the adjacency matrix of the channel graph. In \cite{alon98}, Alon builds a counterexample for the zero-error capacity of a union of channels, disproving the conjecture formulated by Shannon in \cite{shannon56}, which states that the zero-error capacity, defined without the logarithm, is additive with respect to disjoint union of channels. In \cite{jha98}, the authors study a variant of the zero-error capacity problem in which the  strong product of graphs is replaced by the direct product. Instead of considering the independence number of product graphs, Hahn et al. study in \cite{hahn95},  the ratio between the independence number and the number of vertices, and they derive bounds based on the chromatic and the fractional chromatic numbers. In \cite{korner98}, K\"orner and Orlitsky give a review of the literature on zero-error capacity and its variants.

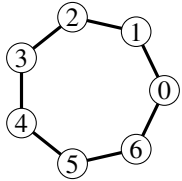
\begin{figure}[h!]
    \centering
    \begin{tikzpicture}
    
\foreach \i in {0,1,2,3,4,5,6}{
        \drawPolarLinewithBG{\i*360/7:1}{\i*360/7+360/7:1};
    }
\foreach \i in {0,1,2,3,4,5,6}{
    \node (\i) at (\i*360/7:1) [shape=circle, fill = white, draw=black, inner sep=1pt] {\i};
    }
\end{tikzpicture}
    \caption{The zero-error capacity of the graph $C_7$, corresponding to the noisy-typewriter channel with $7$ letters, is unknown. Some recent lower and upper bounds are stated in \cite{polak19}.}\label{fig:graphC7}
\end{figure}

\subsection{About cycles}

In \cite{gallai62}, Gallai extends Shannon's condition to bipartite channel graphs, that is a graph without odd cycles, leading to further interests in determining the zero-error capacity for odd cycles. The product of odd cycles of different sizes is investigated by Sonnemann and Krafft, in \cite{sonnemann74}, and by Vesel, in \cite{vesel98}, while Bohman and Holzman study the complementary graphs of odd cycles, in \cite{bohman03}. The channel graph with odd cycles of 7 vertices, denoted by $C_7$, is depicted in Fig. \ref{fig:graphC7}. Its zero-error capacity is still unknown, despite several attempts to build zero-error codes on these channels, see \cite{vesel02}, \cite{codenotti03}, and \cite{polak19}. In Fig. \ref{fig:charp3}, we depict the best known bounds on the zero-error rate for small number of channel uses, for the channel graphs $C_7$ and $C_9$. An other interpretation of the zero-error capacity problem for cycles is the tiling problem, studied in \cite{baumert71}, whose solutions also provide upper bounds for every finite number of channel uses. In \cite{badalyan13}, Badalyan and Markosyan determine the maximum independent sets of products of cycles-powers, that are cycles with edges added towards the vertices of distance at most $k$. In \cite{bohman03a} and \cite{bohman05}, Bohman characterizes the asymptotic zero-error capacity of odd cycles, when the size of the cycle goes to infinity. In \cite{mathew17}, Mathew and \"Osterg{\aa}rd improve several lower bounds on the capacities of odd cycles using stochastic search methods. 

The graphs with odd cycles are also related to Berge's conjecture \cite{Berge1961}, later proved  in \cite{Chudnovsky03} by Chudnovsky et al., namely ``a graph $G$ is perfect if and only if either $G$ or its complementary graph $\bar{G}$, have odd cycles of length 5 or more''. Since the zero-error capacity of the cycle graph $C_5$ is known, as well as the zero-error capacity for perfect graph, see \cite[Theorem 4.18]{grotschel1984polynomial}, $C_7$ is the minimal connected graph for which the zero-error capacity is still an open problem.

\begin{figure}[h!]
    \centering
    \includegraphics[width=9.5cm]{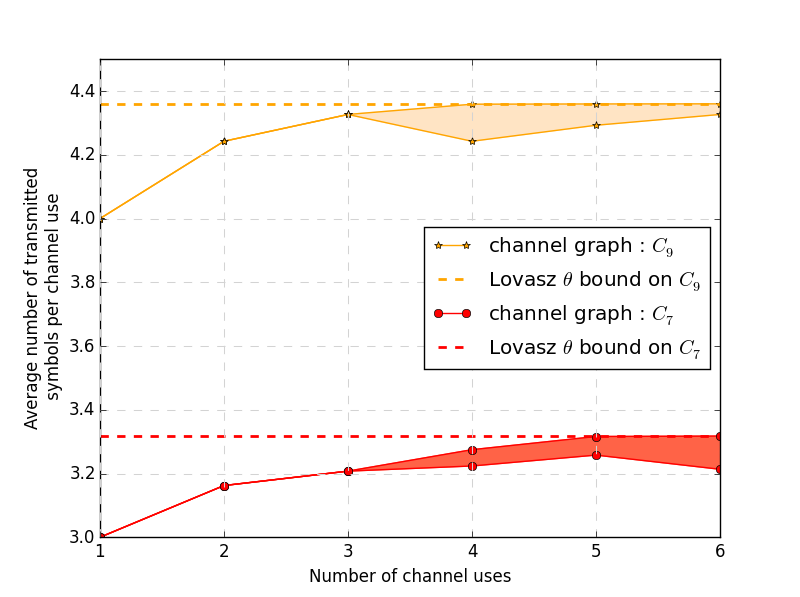}
    \caption{Best known lower and upper bounds on the maximum zero-error rate achievable for small number of channel uses, see \cite{bohman03,vesel02,polak19,lovasz79,codenotti03,baumert71,mathew17}}
    \label{fig:charp3}
\end{figure}

\subsection{Variable-length coding}

In this paper, we investigate the zero-error capacity by constructing variable-length coding schemes. In \cite{shannon48}, Shannon determines the asymptotic performances of variable-length coding via the characteristic polynomial of linear difference equations. In  \cite{weidmann08}, Weidmann et al. study variable-length arithmetic coding scheme for a joint source-channel coding scenario. In \cite{flajolet01}, Flajolet and Sedgewick investigate rational expressions and variable-length coding schemes through their respective generating series. In \cite{Devroye_Allerton18}, Devroye propose a zero-error variable-length communication scheme, assuming that the transmitter has a perfect, but rate limited channel feedback. In \cite{guo90}, Guo and Watanabe present a family of graph where no finite-length code can achieve the zero-error capacity.

\subsection{Recent information-theoretic literature}\label{sec:RecentLiteratureIT}

 In \cite{Devroye_IT17,Devroye_Allerton14,Devroye_ISIT15}, Devroye et al. investigate the zero-error capacity of the primitive relay channel, by proposing a one-shot relaying scheme, termed color-and-forward. They highlight the connection to the zero-error source coding problem with receiver's side information, studied by Witsenhausen in \cite{Witsenhausen76}. In \cite{Devroye_ISIT18}, the authors define a new notion of product of graph that allows to compute recursively the optimal relaying scheme. Several multi-user channels, such as the relay, the multiple-access, the broadcast, and the interference channels, are investigated in \cite{Devroye_Allerton16}, where necessary and sufficient conditions regarding the positivity of the zero-error capacity are provided. The zero-error capacity with noisy channel feedback is studied in \cite{Devroye_Allerton17} and  \cite{ZhaoPermuter_IT10}, where dynamic programming provide lower and upper bounds. 

In \cite{Wang_ISIT17,wang2017graph}, Wang and Shayevitz investigate the combination of  zero-error source and channel coding schemes, by introducing the notion of ``graph information ratio'', also related to the relative Shannon capacity of two graphs, introduced by K\"{o}rner and Marton in \cite{KornerMarton2001}. In \cite{Shayevitz_IT17}, Shayevitz investigates the zero-error broadcasting problem by introducing the $\rho$-capacity function, for which upper and lower bounds are derived. In \cite{OrdentlichShayevitz_ISIT15}, Ordentlich and Shayevitz investigate the zero-error capacity region of the multiple access channel, called the binary adder. They provide a new outer bound that strictly improves upon the bound obtained by Urbanke and Li, in \cite{UrbankeLi_ITW98}.

In \cite{Wiese_TAC19,Wiese_ISIT16,Wiese_CDC16}, Wiese et al. define the zero-error wiretap codes by requiring that every output at the eavesdropper can be generated by at least two inputs. They define the zero-error secrecy capacity as the supremum of rates for which there exists a zero-error wiretap code, and they show it either equals zero or the zero-error capacity of the channel between the encoder and the legitimate receiver.

In \cite{RuizPerezCruz_ITW11}, Ruiz and P\'erez-Cruz construct linear codes over rings, and provide a lower bound on the zero-error capacity for the noisy-typewriter channel with odd letters of the form $2^n+1$, that outperforms Bohman's bound in \cite{bohman03}. In \cite{CullinaDalaiPolyanskiy_ISIT16}, Cullina et al. introduced a different notion of product of graph by removing edges between sequences which differ in more than $d$ positions. They provide upper and lower bound on the asymptotic independence number of such an iterated product of graph. In \cite{Dalai_ISIT17}, Dalai improves the bound on the zero-error list-decoding capacity, introduced by Elias in \cite{Elias_IT88}. In \cite{Radziszowski_IT13}, Xu and Radziszowski study the construction of lower bounds for multicolor Ramsey numbers of product graphs, and their relation to the zero-error capacity. In particular, the authors prove that the supremum of the zero-error capacity over all graphs with independence number equal to $2$, cannot be achieved by any finite graph power.

%
%

\subsection{Scenarios and contributions}

In this paper, we design three coding algorithms that are based on a generator set of zero-error words, refereed to as variable-length, intermingled, and automata-based coding schemes. We characterize their respective asymptotic performances via the root of the characteristic polynomial, in Theorem \ref{theo:VariableLength}, the spectral radius of the adjacency matrix, in Theorem \ref{theo:Intermingled}, the inverse of the convergence radius of the generator series, in Theorem \ref{theo:RationalCodes}. 
\begin{itemize}
    \item \emph{Variable-length} coding scheme, in Sec. \ref{sec:VariableLength},  produces channel inputs sequences by concatenation of zero-error words from a generator set.
    \item \emph{Intermingled} coding scheme, in Sec. \ref{sec:Intermingled}, allows to stop the transmission and switch to another word from the generator set. Thus, additional information is embedded over the positions of such a stops and switches.
\item \emph{An example} with the channel graph $C_5 \boxplus \mathbf{1}$, is stated in Sec. \ref{sec:Example}. We construct explicitly the generator set of zero-error words, for which the asymptotic rate of the intermingled coding achieves the zero-error capacity. 
\item \emph{Automata-based} coding scheme, in Sec. \ref{sec:Automata}, generalizes the two previous algorithms by allowing multiple interleaving of the same word from the generator set.
\end{itemize}

The paper is organized as follows. The definitions of the zero-error channel capacity and of the maximum independence number of the product graph of the channel are stated in Sec. \ref{sec:Model}. The variable-length coding  and the intermingled coding are studied in Sec. \ref{sec:VariableLength} and \ref{sec:Intermingled}. In Sec \ref{sec:Example}, we provide an example based on the channel graph $C_5 \boxplus \mathbf{1}$. The automata-based coding scheme is investigated in Sec. \ref{sec:Automata}. The proofs are stated in the Appendices.

\section{Presentation of the model}\label{sec:Model}

\subsection{Notations}

\begin{itemize}[label = $\smallblackdiamond$]
    \item Given a finite set $\mathcal{A}$, we denote by $\mathcal{P}(\mathcal{A})$ its power set, and $\#A$ its cardinality.
    \item We use the following notations for matrices slicing. Given a matrix $M \in \mathcal{M}_{n,p}(\mathbb{R})$ we define
    \begin{equation}
        M_{:,j} \doteq \left(\mkern-9mu\begin{array}{c}
             M_{1,j}  \\
             \vdots  \\ 
             M_{n,j}  
        \end{array}\mkern-9mu\right) \text{ and } M_{i,:} \doteq \left(\mkern-9mu\begin{array}{ccc}
             M_{i,1} & \hdots & M_{i,p}
        \end{array}\mkern-9mu\right).
    \end{equation}
    We use the same notation for tuples or words : let $w = x_1\hdots x_{|w|}$ be a word over the alphabet $\mathcal{X}$, we note $|w|$ the length of $w$ and we define
    \begin{equation}
        w_{i:} = x_{i}x_{i+1} ... x_{|w|} \quad \text{and} \quad w_{:j} = x_1 ... x_{j}
    \end{equation}
    \item We note the support of a vector $\supp$, that is the set of the indexes of its non-null components.
    \item Let $A \in \mathcal{M}_{m,n}(\mathbb{R})$ and $B \in \mathcal{M}_{p,q}(\mathbb{R})$ be two matrices, we note $A \otimes B \in \mathcal{M}_{mp,nq}(\mathbb{R})$ their Kronecker product, that is :
    \begin{equation}
        A \otimes B \doteq \big( A_{i_1, j_1}B_{i_2, j_2}\big)_{i \in \llbracket1,m \rrbracket \times \llbracket 1,p \rrbracket, j \in \llbracket 1,n \rrbracket \times \llbracket 1,q \rrbracket}
    \end{equation}
    and we also note $A^{\otimes L} \doteq A \otimes ... \otimes A$ ($L$ times).
\end{itemize}

\subsection{Zero-error capacity}
We consider a Discrete Memoryless Channel (DMC) where $\mathcal{X}$ denotes the input alphabet, $\mathcal{Y}$ denotes the output alphabet, and $W = (W_{x,y})_{x \in \mathcal{X}, y \in \mathcal{Y}}$ denotes the transition probabilities.

\begin{definition}[Channel graph]\label{def:ChannelGraph}
The channel graph $G_W \doteq \big(\mathcal{V}(G_W),\mathcal{E}(G_W)\big)$ is defined by the input alphabet as set of vertices  $\mathcal{V}(G_W) = \mathcal{X}$, and $xx' \in \mathcal{E}(G_W)$ if $W_{x,y} > 0$ and $W_{x',y} > 0$ for some output $y \in \mathcal{Y}$. 

Two inputs $x$ and $x'$ are \textit{distinguishable} if they satisfy $W_{x,y} = 0$ or $W_{x',y} = 0$ for all input $y$, equivalently
\begin{equation}
    x \text{ and } x' \text{ are} \textit{ distinguishable} \Leftrightarrow \max_{y \in \mathcal{Y}} \min(W_{x,y},W_{x',y}) = 0 \Leftrightarrow xx'\notin \mathcal{E}(G_W).
\end{equation}
A family of inputs is distinguishable if its elements are pairwise distinguishable. 
\end{definition}

The channel graph is the main tool for the characterization of the zero-error capacity, that is the maximum asymptotic number of bits that can be transmitted with zero error. Before stating the definition, we introduce Fekete's Lemma, see \cite[Lemma 11.6, pp. 103]{wilson92}. 

\begin{lemma}[Fekete]\label{lemma:Fekete}
For all superadditive sequence $(u_l)_{l\in\N}$, i.e. $u_{l + l'} \geq u_l + u_{l'}$ for all $(l,l')$,  $\lim\limits_{l \rightarrow \infty} \frac{u_l}{l}$ exists and is equal to $\sup\limits_l \frac{u_l}{l}$ (it can be $+ \infty$).
\end{lemma}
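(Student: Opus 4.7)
The plan is to set $L \doteq \sup_{l \geq 1} u_l/l$, which lies in $(-\infty, +\infty]$, and to show $u_l/l \to L$ by establishing the inequalities $\limsup_l u_l/l \leq L$ and $\liminf_l u_l/l \geq L$ separately. The upper bound is essentially free: by definition of $L$, every term of the sequence satisfies $u_l/l \leq L$, so the limit superior cannot exceed $L$.

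The substantive step is the lower bound, and I would use the classical Euclidean-division argument underlying Fekete's lemma. Fix a threshold $M < L$ (or, in the case $L = +\infty$, an arbitrary real $M$). By definition of the supremum there exists some $k \geq 1$ with $u_k/k > M$. For an arbitrary index $n$, perform the Euclidean division $n = q k + r$ with $0 \leq r < k$, and iterate the superadditivity inequality $u_{l+l'} \geq u_l + u_{l'}$ to obtain $u_{qk} \geq q\, u_k$, and then
\begin{equation}
u_n \;\geq\; q\, u_k + u_r,
\end{equation}
with the convention that the $u_r$ term is absent when $r = 0$. Dividing by $n = qk + r$ and letting $n \to \infty$ with $k$ fixed, the quotient $q/n$ converges to $1/k$ while the remainder contribution $u_r/n$ vanishes, yielding $\liminf_n u_n/n \geq u_k/k > M$. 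Since $M < L$ was arbitrary, this gives $\liminf \geq L$; combined with the upper bound, the limit exists and equals $L$. The case $L = +\infty$ is handled verbatim, applying the same estimate to arbitrarily large thresholds $M$ to conclude that the liminf is infinite.

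The only delicate point, and hence the main (mild) obstacle, is controlling the remainder term $u_r/n$: one must use that $r$ ranges over the \emph{finite} set $\{0, \ldots, k-1\}$ while $n$ grows, so that the finitely many values $u_0, \ldots, u_{k-1}$ are uniformly bounded and their ratio with $n$ tends to zero. Everything else is a direct algebraic manipulation of the superadditivity hypothesis, so no further difficulty is expected.
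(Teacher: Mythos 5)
Your proof is correct and follows essentially the same route as the paper: the classical Euclidean-division argument, fixing a near-optimal word length, iterating superadditivity to get $u_n \geq q\,u_k + u_r$, and letting the remainder contribution vanish. In fact your write-up is slightly more careful than the paper's (which has a sign slip, writing $\sup_l u_l/l + \epsilon$ where $-\epsilon$ is meant, and replaces $u_r$ by $r$ without comment), and you handle the $+\infty$ case and the boundedness of the finitely many remainder values explicitly.
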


\begin{definition}[Zero-error capacity]\label{def:ZeroErrorCapacity}
Let $N(W,L)$ be the maximum number of distinguishable inputs for $L$ uses of the channel $W$. Equivalently, $N(W,L)$ is the value of the optimization problem
\begin{align}
    & \max_{\mathcal{N} \subseteq \mathcal{X}^L} \;\; \#\mathcal{N} \\
    \text{s.t. } & \max_{y \in \mathcal{Y}^L} \#\supp\; (W^{\otimes L}_{x,y})_{x \in \mathcal{N}} \leq 1.
\end{align}
Then the zero-error capacity of $W$ is defined by
\begin{align}
    C_0(W) \doteq & \limsup\limits_{L \rightarrow \infty} \frac{1}{L}\log\big(N(W,L)\big) \\
    \underset{[\text{Fekete}]}{=} & \sup\limits_L \frac{1}{L}\log\big(N(W,L)\big),
\end{align}
where the second equality comes from Fekete's lemma as the sequence $\log N(W, \cdot)$ is superadditive for all $W$. Indeed, for $L + L'$ channel uses, there exists at least $N(W, L) \cdot N(W, L')$ distinguishable inputs, by using successively the codebook for $L'$ channel uses and the codebook for $L$ channel uses.
\end{definition}

\subsection{Maximum independent set of the strong product channel graph}

The zero-error capacity relates to the graph theoretic notions of \emph{strong product} and \emph{maximum independent set}.

\begin{definition}[Strong product $\boxtimes$]\label{def:StrongProduct}
Let $G \doteq (\mathcal{V}(G), \mathcal{E}(G))$ and $G' \doteq (\mathcal{V}(G'), \mathcal{E}(G'))$, we define the strong product, i.e. the \textit{AND product}, $G \boxtimes G' \doteq \big(\mathcal{V}(G \boxtimes G'), \mathcal{E}(G \boxtimes G')\big)$ \text{by }
\begin{align}
&\mathcal{V}(G \boxtimes G') \doteq \mathcal{V}(G) \times \mathcal{V}(G'), \\
&\forall (v_1,v_1') \neq (v_2, v_2'),\,(v_1,v_1')(v_2, v_2') \in \mathcal{E}(G \boxtimes G') \textit{ if } \\
&\left(\begin{gathered}
v_1v_2 \in \mathcal{E}(G) \nonumber\\
\text{or  } v_1 = v_2
\end{gathered}\right) \text{ AND } \left(\begin{gathered}
v_1'v_2' \in \mathcal{E}(G') \\
\text{or  } v_1' = v_2'
\end{gathered}\right).
\end{align}
\end{definition}


\begin{figure}[h!]
As an example, we consider two channel graphs $G = G' = \begin{tikzpicture}
\draw[draw=black, very thick] (0,0) -- (2,0);
\node[draw=black, fill=white, shape=circle, inner sep=1pt] (0) at (0,0) {0};
\node[draw=black, fill=white, shape=circle, inner sep=1pt] (1) at (1,0) {1};
\node[draw=black, fill=white, shape=circle, inner sep=1pt] (2) at (2,0) {2};
\end{tikzpicture}$, then the product graph $G \boxtimes G'$ is the king's graph, corresponding to two channel uses.
    
    \centering{\begin{tikzpicture}
\foreach \i in {0,1,2}{
\drawLinewithBG{\i,0}{\i,2};
\drawLinewithBG{0,\i}{2,\i};
}
\foreach \i in {0,1}{
\foreach \j in {0,1}{
\drawLinewithBG{\i,\j}{\i+1,\j+1};
\drawLinewithBG{\i+1,\j}{\i,\j+1};
}
}
\foreach \i in {0,1,2}{
\foreach \j in {0,1,2}{
\node[draw=black, fill=white, shape=circle, inner sep=0.5pt] (0) at (\i,\j) {\small\i,\j};
}
}
\end{tikzpicture}}
\end{figure}

We denote by $G_{W}^{\boxtimes L} = G_W \boxtimes ... \boxtimes G_W$, the $L$-times iterated strong product, and we give several equivalent interpretations.
\begin{itemize}[label = $\smallblackdiamond$]
    \item $(x_l)_{l \leq L}$ and $(x'_l)_{l \leq L}$ are not distinguishable.
    \item For all $l \leq L$, there exists $y_l$ such that $W_{x_l,y_l} > 0$ and $W_{x'_l,y_l} > 0$.
    \item $xx' \in \mathcal{E}(G_{W}^{\boxtimes L})$.
    \item $\langle W^{\otimes L}_{x, :},W^{\otimes L}_{x', :}\rangle > 0$.
\end{itemize}

\begin{definition}[Disjoint  union $\boxplus$]\label{def:DisjointUnion}
Given two graphs $G = \big(\mathcal{V}(G), \mathcal{E}(G)\big)$ and $G'= \big(\mathcal{V}(G'), \mathcal{E}(G')\big)$, we define $G \boxplus G' = \big(\mathcal{V}(G \boxplus G'), \mathcal{E}(G \boxplus G')\big)$ to be the disjoint union between $G$ and $G'$, that is :
\begin{align}
&        \mathcal{V}(G \boxplus G') = \mathcal{V}(G) \cup \mathcal{V}(G'), \\
&        vv' \in \mathcal{E}(G \boxplus G') \;\; \textit{ if } \;\;
        \left(\begin{gathered}
            v,v' \in \mathcal{V}(G) \\
            \text{and  } vv' \in \mathcal{E}(G) 
        \end{gathered}\right) \text{ OR } \left(\begin{gathered}
            v,v' \in \mathcal{V}(G') \\
            \text{and  } vv' \in \mathcal{E}(G') 
        \end{gathered}\right).
\end{align}
\end{definition}

\begin{remark}
Since the set of finite graphs with the laws $\boxplus, \boxtimes$ have a semiring structure, we denote $\mathbf{1}$ the graph with one vertex and $\mathbf{0}$ the graph with zero vertex.
\end{remark}

\begin{definition}[Independent set]\label{def:CliqueStable}
An independent set $\mathcal{S}$ is a subset of $\mathcal{V}(G)$ such that $\forall s,s' \in \mathcal{S}, \; ss' \notin \mathcal{E}(G)$.
\end{definition}

\begin{definition}[Independence number $\alpha$]\label{def:MaxCliqueStable}
The independence number of a graph $G$ is the size of the largest independent set of $G$. It is denoted by $\alpha(G)$.
\end{definition}

\begin{proposition}\label{prop:StableMax}
The maximum number of distinguishable inputs $N(W,L)$ is the independence number $\alpha(G_W^{\boxtimes L})$ of the product graph $G_W^{\boxtimes L}$.
\end{proposition}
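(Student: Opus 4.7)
The plan is to show that the optimization problem defining $N(W,L)$ coincides exactly with the problem of finding a maximum independent set in $G_W^{\boxtimes L}$, by translating the channel-theoretic distinguishability condition into the graph-theoretic non-adjacency condition coordinate by coordinate.

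First, I would expand the key quantity by using the product structure of the memoryless channel: for any $x, x' \in \mathcal{X}^L$ and $y \in \mathcal{Y}^L$,
\begin{equation}
W^{\otimes L}_{x,y} = \prod_{l=1}^{L} W_{x_l, y_l},
\end{equation}
so that $\#\supp(W^{\otimes L}_{x,y})_{x \in \mathcal{N}} \leq 1$ for all $y$ is equivalent to saying that for every pair of distinct $x, x' \in \mathcal{N}$, there is no $y \in \mathcal{Y}^L$ with $W^{\otimes L}_{x,y} > 0$ and $W^{\otimes L}_{x',y} > 0$ simultaneously. Using the product factorization, the existence of such a $y$ is equivalent to the existence, for each $l \in \llbracket 1, L\rrbracket$, of some $y_l \in \mathcal{Y}$ with $W_{x_l,y_l} > 0$ and $W_{x'_l,y_l} > 0$.

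Next, I would match this coordinate-wise condition with Definition~\ref{def:StrongProduct}. By Definition~\ref{def:ChannelGraph}, the existence of such a $y_l$ is exactly the condition $x_l x'_l \in \mathcal{E}(G_W)$ when $x_l \neq x'_l$; and when $x_l = x'_l$, such a $y_l$ trivially exists since $W$ is a stochastic matrix (every row contains a positive entry). Hence the pair $(x,x')$ with $x \neq x'$ fails to be distinguishable if and only if, for each $l$, either $x_l = x'_l$ or $x_l x'_l \in \mathcal{E}(G_W)$, which is exactly the defining condition $xx' \in \mathcal{E}(G_W^{\boxtimes L})$ unfolded $L$ times via Definition~\ref{def:StrongProduct}.

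Finally, I would conclude by a standard double-inclusion: a set $\mathcal{N} \subseteq \mathcal{X}^L$ is a set of pairwise distinguishable inputs if and only if $\mathcal{N}$ is an independent set of $G_W^{\boxtimes L}$ in the sense of Definition~\ref{def:CliqueStable}. Taking the maximum cardinality on both sides gives $N(W,L) = \alpha(G_W^{\boxtimes L})$. The only mild obstacle is making clean the bookkeeping between the ``or $v_1 = v_2$'' clause of the strong product and the trivial case $x_l = x'_l$, which is handled by the observation above about stochastic rows; no further computation is required.
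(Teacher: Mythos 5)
Your proof is correct and follows essentially the same route as the paper, which states this proposition as an immediate consequence of the listed equivalences between non-distinguishability of $(x_l)_{l\leq L}$ and $(x'_l)_{l\leq L}$, the coordinate-wise existence of common outputs, and adjacency in $G_W^{\boxtimes L}$; you simply spell out the factorization $W^{\otimes L}_{x,y}=\prod_l W_{x_l,y_l}$ and the handling of equal coordinates via stochasticity, which the paper leaves implicit.
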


Thus it makes sense to work directly on channel graphs, independently of the channel transition probabilities $W = (W_{x,y})_{x \in \mathcal{X}, y \in \mathcal{Y}}$, that generated the channel graph.

\section{Variable-length coding}\label{sec:VariableLength}

In this section, we introduce a variable-length coding scheme based on a generator set of words, tailored for the zero-error transmission. We characterize the asymptotic coding rate via the unique positive root of the characteristic polynomial of the generator set.

\begin{definition}[Set of words]\label{def:SetWords}
For a given finite set $\mathcal{S}$, we define the set of words over $\mathcal{S}$ by
\begin{equation}
    \mathcal{S}^* \doteq \bigcup_{l \in \mathbb{N}} \mathcal{S}^l,
\end{equation}
with the usual concatenation law, i.e. the neutral element for that law is the empty word, denoted by $\epsilon$. The length of a word $w \in \mathcal{S}^*$ is the integer $l$ such that $w \in \mathcal{S}^l$ and is denoted by $|w|$.  For a given subset $\mathcal{S'} \subseteq \mathcal{S}^*$, and integers $l,l'$ such that $l \leq l'$, we define 
\begin{align}
    \mathcal{S}'_{[l]} \doteq& \lbrace w \in \mathcal{S}' \:|\: |w| = l \rbrace, \\
    \mathcal{S}'_{[l:l']} \doteq& \lbrace w \in \mathcal{S}' \:|\: l \leq |w| \leq l' \rbrace.  \quad
\end{align}
\end{definition}

The variable-length coding is based on the following idea, instead of determining the maximum number of distinguishable inputs over $L$ channel uses when $L$ goes to infinity, we consider codes in $\mathcal{X}^*$ and we determine the asymptotic number of transmitted symbols per channel use.

\begin{definition}[Generator set]\label{def:VariableLengthCodes}
The \emph{generator set} is a finite subset $\mathcal{C}$ of $\mathcal{X}^*$ composed of words of variable length.

The generator set $\mathcal{C}$ is zero-error for the channel $W$ if 
\begin{equation}
    \forall c, c' \in \mathcal{C}\text{ such that }c \neq c',\; |c| \leq |c'|, \text{ and } c c'_{:|c|} \notin \mathcal{E}(G_W^{\boxtimes |c|}),
\end{equation}
with the convention that vertices in $G_W^{\boxtimes |c|}$ are auto-adjacent. At the end of its transmission, the word $c\in \mathcal{C}$ is  distinguishable from any other word $c'\in \mathcal{C}$.
\end{definition}

The variable-length coding scheme produces the channel input sequences by concatenating the words from the generator set $\mathcal{C}$. The set of all possible channel input sequences of length $L$, constructed with such a procedure, is denoted by $\mathcal{C}^*_{[L]}$. The zero-error property extends naturally from the generator set $\mathcal{C}$ to the set of channel input sequences $\mathcal{C}^*_{[L]}$.

\begin{definition}[Asymptotic rate of variable-length codes]\label{def:RateVariableLengthCodes}
We consider the generator set $\mathcal{C}$ that is zero-error for the channel $W$. The asymptotic rate of $\mathcal{C}$ is defined by
\begin{equation}
    r(\mathcal{C}) \doteq \lim_{L \rightarrow \infty} \frac{1}{L} \log\#\mathcal{C}^*_{[L]} \underset{[\text{Fekete}]}{=} \sup_{L \in \mathbb{N}} \frac{1}{L} \log\#\mathcal{C}^*_{[L]}.
\end{equation}
The average number of transmitted symbols per channel use is defined by
 \begin{equation}
     \nu(\mathcal{C}) \doteq \lim_{L \rightarrow \infty} \sqrt[L]{\#\mathcal{C}^*_{[L]}} \underset{[\text{Fekete}]}{=} \sup_{L \in \mathbb{N}} \sqrt[L]{\#\mathcal{C}^*_{[L]}} = 2^{r(\mathcal{C})}.
 \end{equation}
We can apply Fekete's lemma only if the values of $\log\#\mathcal{C}^*_{[l]}$ are finite for all $l$ large enough, i.e. if and only if $\gcd(|c|,\, c \in \mathcal{C}) = 1$. When $\gcd(|c|,\, c \in \mathcal{C}) = d \neq 1$, we define the rate as 
\begin{equation}
    r(\mathcal{C}) \doteq \lim_{L \rightarrow \infty} \frac{1}{dL} \log\#\mathcal{C}^*_{[dL]},
\end{equation}
and we take again $\nu(\mathcal{C}) \doteq 2^{r(\mathcal{C})}$ with this new definition.
\end{definition}

The asymptotic rate $r(\mathcal{C})$ corresponds to the asymptotic number of bits transmitted per channel use by concatenating the variable-length words from the generator set $\mathcal{C}$. For each generator set $\mathcal{C}$, we have $r(\mathcal{C}) \leq C_0(W)$ if $\mathcal{C}$ is zero-error.

\begin{theorem}[Rate computation of variable-length codes]\label{theo:VariableLength}
Let $W$ be a DMC and $\mathcal{C} \subseteq \mathcal{X}^*$ a generator set that is zero-error for the channel $W$. We denote by $\overline{l}>0$, (resp. $\underline{l}>0$), the maximal length, (resp. the minimal length), of the words in $\mathcal{C}$.  Then $\nu(\mathcal{C})$ is the unique positive solution of the characteristic polynomial
\begin{equation}
    X^{\overline{l}} -\sum_{l = \underline{l}}^{\overline{l}} \#\mathcal{C}_{[l]} X^{\overline{l}-l}=0,\label{eq:CharacteristicPolynomial}
\end{equation}
where $\mathcal{C}_{[l]} = \big\lbrace c \in \mathcal{C} \:\big|\: |c| = l \big\rbrace$.
\end{theorem}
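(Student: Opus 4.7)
The plan is to reduce the asymptotic counting of $a_L \doteq \#\mathcal{C}^*_{[L]}$ to the analysis of a linear recurrence whose characteristic polynomial is exactly \eqref{eq:CharacteristicPolynomial}, and then to match the exponential growth rate to the unique positive root of that polynomial.

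First I would observe that the zero-error hypothesis forces $\mathcal{C}$ to be a prefix-free code. Indeed, if $c \in \mathcal{C}$ were a proper prefix of some $c' \in \mathcal{C}$ with $|c| \leq |c'|$, then $c'_{:|c|} = c$ and, using the convention that vertices of $G_W^{\boxtimes |c|}$ are auto-adjacent, we would have $c\, c'_{:|c|} \in \mathcal{E}(G_W^{\boxtimes |c|})$, contradicting Definition~\ref{def:VariableLengthCodes}. Prefix-freeness implies that every word in $\mathcal{C}^*$ admits a \emph{unique} factorization into elements of $\mathcal{C}$ (read left to right, the end of the first factor is determined). Partitioning $\mathcal{C}^*_{[L]}$ by the length of this first factor then yields, for $L \geq \overline{l}$,
\begin{equation}
    a_L = \sum_{l = \underline{l}}^{\overline{l}} \#\mathcal{C}_{[l]}\cdot a_{L-l},
\end{equation}
whose characteristic polynomial is $Q(X) = X^{\overline{l}} - \sum_{l=\underline{l}}^{\overline{l}} \#\mathcal{C}_{[l]} X^{\overline{l}-l}$.

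Next I would show that $Q$ has a unique positive root. Dividing by $X^{\overline{l}}$, the equation $Q(X)=0$ becomes $\sum_{l=\underline{l}}^{\overline{l}} \#\mathcal{C}_{[l]} X^{-l} = 1$; the left-hand side is strictly decreasing in $X>0$, tends to $+\infty$ as $X\to 0^+$ and to $0$ as $X\to +\infty$, so there is exactly one positive root, call it $\nu$. The upper bound $a_L \leq C\nu^L$ for some constant $C$ follows by induction on the recurrence: choose $C$ large enough that the bound holds for $L<\overline{l}$, and use $\sum_l \#\mathcal{C}_{[l]}\nu^{-l}=1$ to propagate it. For the matching lower bound I would pass to the generating series $C(z)=\sum_L a_L z^L = 1/(1-P(z))$ with $P(z)=\sum_l \#\mathcal{C}_{[l]} z^l$. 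Since $C$ has non-negative coefficients, Pringsheim's theorem locates a dominant singularity on the positive real axis, which must be $z=1/\nu$ (the smallest positive root of $1-P(z)$). Hence the radius of convergence of $C$ is $1/\nu$ and $\limsup_L a_L^{1/L} = \nu$. Unique factorization gives $a_{L+L'}\geq a_L a_{L'}$, so Fekete's lemma (Lemma~\ref{lemma:Fekete}) applied to $\log a_L$ upgrades the $\limsup$ to a genuine limit, proving $\nu(\mathcal{C})=\nu$. When $d=\gcd\{|c|:c\in\mathcal{C}\}>1$, only indices $L\in d\mathbb{N}$ contribute; the same argument run along the thinned subsequence yields the definition of $r(\mathcal{C})$ given in Definition~\ref{def:RateVariableLengthCodes}.

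The main obstacle is pinning the asymptotic rate to the \emph{positive} root rather than to some other (possibly complex) root of larger modulus. The cleanest path is the Pringsheim/analytic-combinatorics argument above; an alternative is to observe that the recurrence is governed by the companion matrix, which is non-negative, and to invoke Perron--Frobenius to identify $\nu$ as its spectral radius. In both arguments, the decisive input is the non-negativity of the coefficients $\#\mathcal{C}_{[l]}$, which ultimately stems from the combinatorial nature of the generating set.
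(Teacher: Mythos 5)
Your proof is correct, but it follows a genuinely different route from the paper's. The paper proves Theorem \ref{theo:VariableLength} by building a transition graph on the positions inside the generator words, identifying $\#\mathcal{C}^*_{[L]}$ with a count of length-$L$ paths, i.e.\ with entries of $M_G^L$, invoking Gelfand's formula to show $\nu(\mathcal{C})$ equals the spectral radius of $M_G$ (Lemma \ref{lemma:Spectrum}), and then pinning that spectral radius to the unique positive root of \eqref{eq:CharacteristicPolynomial} by showing every nonzero eigenvalue satisfies $\sum_l \#\mathcal{C}_{[l]}\lambda^{-l}=1$ and by exhibiting the explicit positive eigenvector $(\delta^{j-|\kappa^i|})_{(i,j)}$ for the positive root $\delta$ (Lemmas \ref{lemma:Delta} and \ref{lemma:UniquePositiveSol}). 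You instead derive the linear recurrence $a_L=\sum_l \#\mathcal{C}_{[l]}a_{L-l}$ directly and identify the growth rate through the generating series $1/(1-P(z))$, locating its dominant singularity at $1/\nu$ via non-negativity of coefficients (Pringsheim) together with the modulus bound $|P(z_0)|\leq P(|z_0|)$, then using Cauchy--Hadamard and Fekete; this is essentially the machinery the paper only deploys later, in Proposition \ref{prop:RecursiveComputation} and Theorem \ref{theo:RationalCodes}. Two things your write-up buys: you make explicit that the zero-error condition forces $\mathcal{C}$ to be prefix-free, hence uniquely decodable, which is exactly what justifies equating distinct words with factorization sequences (the paper uses this silently when it equates words with paths), and your singularity argument does not need the ``positive rate'' assumption that the paper's proof quietly invokes to lower-bound $\langle M_G^L/\|M_G^L\|, I\rangle$. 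Conversely, the paper's spectral/eigenvector argument is the one that transfers verbatim to the intermingled and automata-based settings of Theorems \ref{theo:Intermingled} and \ref{theo:RationalCodes}, where no simple scalar recurrence in $L$ is available. Minor remarks: the superadditivity $a_{L+L'}\geq a_L a_{L'}$ needs only that a concatenation of prescribed lengths determines its two halves, not unique factorization; and your separate inductive upper bound $a_L\leq C\nu^L$ is redundant once the radius of convergence is identified, though it is not wrong.
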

The proof of Theorem \ref{theo:VariableLength} is stated in App. \ref{sec:ProofTheoremVariableLength} and relies on standard properties of linear difference equations, pointed out by Shannon in \cite[Part I]{shannon48} for discrete noiseless systems, see also \cite[pp. 13]{greene2007mathematics}. In Sec. \ref{sec:Example}, we provide an example based on a channel graph for which we compute explicitly the asymptotic rate $r(\mathcal{C})$ of the variable-length code.

\begin{remark}[Linear difference formulation]
We provide an equivalent formulation for the average number of transmitted symbols per channel use, for a variable-length code with generator set $\mathcal{C}$. Given $\#\mathcal{C}^*_{[l]}$ for $l \in \rrbracket L-\overline{l},L\rrbracket$, one can compute $\#\mathcal{C}^*_{[L+1]}$ as a linear combination of the $(\#\mathcal{C}^*_{[l]})_{l \in \rrbracket L-\overline{l},L\rrbracket}$, i.e. 
\begin{align}
   \#\mathcal{C}^*_{[L]} = \sum_{l = \underline{l}}^{\overline{l}} \#\mathcal{C}_{[l]}\#\mathcal{C}^*_{[L-l]}.
\end{align}
In particular, the companion matrix of the characteristic polynomial \eqref{eq:CharacteristicPolynomial}
\begin{align}
M =   \left(\!\begin{array}{cccccc}
        0 & 1 & 0 &   &  (0) \\
          & 0 & 1 & \ddots &   \\
          &   & \ddots & \ddots & 0\\
        (0)&   &   & 0 & 1 \\
        \!\#\mathcal{C}_{[\overline{l}]}\! & \!\!\#\mathcal{C}_{[\overline{l}-1]}\!\! & \hdots & \!\#\mathcal{C}_{[2]}\! & \!\#\mathcal{C}_{[1]}\! \\
    \end{array}\!\right),
\end{align}
satisfies
\begin{align}
M(\#\mathcal{C}^*_{[l]})_{l \in \rrbracket L-\overline{l},L\rrbracket} =& (\#\mathcal{C}^*_{[l+1]})_{l \in \rrbracket L-\overline{l},L\rrbracket}.
\end{align}

Then the asymptotic rate corresponds to the dominant term after computing the Jordan decomposition of $M$. 

Note that the number of generated words for any bounded length window $\llbracket L-l,L \rrbracket$ behaves equivalently asymptotically, for all $l \in \mathbb{N}$ we have
\begin{equation}
    \sqrt[L]{\#\mathcal{C}^*_{[L-l:L]}} \underset{L \rightarrow \infty}{\rightarrow} \nu(\mathcal{C}).
\end{equation}
\end{remark}

\begin{proposition}\label{prop:C0}
$C_0(W) = \sup_\mathcal{C} \log \nu(\mathcal{C})$.
\end{proposition}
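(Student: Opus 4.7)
The plan is to prove the two inequalities $\sup_\mathcal{C} \log \nu(\mathcal{C}) \le C_0(W)$ and $\sup_\mathcal{C} \log \nu(\mathcal{C}) \ge C_0(W)$ separately. Both are direct once we use the fact, already asserted in the excerpt, that the zero-error property of a generator set $\mathcal{C}$ passes to the concatenation set $\mathcal{C}^*_{[L]}$, together with Proposition \ref{prop:StableMax} identifying $N(W,L)$ with the independence number $\alpha(G_W^{\boxtimes L})$.

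For the upper bound, I would fix any zero-error generator set $\mathcal{C}$ and any length $L$ (taking $L$ a multiple of $d = \gcd(|c|,\, c \in \mathcal{C})$ when needed). The words in $\mathcal{C}^*_{[L]}$ are length-$L$ sequences in $\mathcal{X}^L$ that are pairwise distinguishable, so by Proposition \ref{prop:StableMax} they form an independent set in $G_W^{\boxtimes L}$. Hence $\#\mathcal{C}^*_{[L]} \le N(W,L)$. Taking logarithms, dividing by $L$, and passing to the limit yields $r(\mathcal{C}) \le C_0(W)$, i.e.\ $\log \nu(\mathcal{C}) \le C_0(W)$.

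For the lower bound, the idea is to realize each fixed-length optimal independent set as a generator set. Fix $L_0 \in \N$, let $S \subseteq \mathcal{X}^{L_0}$ be a maximum independent set of $G_W^{\boxtimes L_0}$ with $\#S = N(W,L_0)$, and set $\mathcal{C} \doteq S$. All words of $\mathcal{C}$ have length $L_0$ and any two distinct words are distinguishable over $L_0$ channel uses, so $\mathcal{C}$ is a zero-error generator set in the sense of Definition \ref{def:VariableLengthCodes}. Applying Theorem \ref{theo:VariableLength} to this $\mathcal{C}$, where $\underline{l}=\overline{l}=L_0$, the characteristic polynomial reduces to $X^{L_0} - N(W,L_0) = 0$, so $\nu(\mathcal{C}) = N(W,L_0)^{1/L_0}$ and $\log \nu(\mathcal{C}) = \tfrac{1}{L_0}\log N(W,L_0)$. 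Taking the supremum over $L_0$ and invoking Definition \ref{def:ZeroErrorCapacity} gives $\sup_\mathcal{C} \log \nu(\mathcal{C}) \ge C_0(W)$.

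No step looks genuinely difficult; the only bookkeeping subtlety is the $\gcd$ convention: since all words of $\mathcal{C}$ have length $L_0$, we have $d = L_0$ and must use the second definition $r(\mathcal{C}) = \lim_L \tfrac{1}{L_0 L}\log \#\mathcal{C}^*_{[L_0 L]}$. Fortunately $\mathcal{C}^*_{[L_0 L]}$ consists precisely of $L$-tuples of elements of $\mathcal{C}$, so $\#\mathcal{C}^*_{[L_0 L]} = N(W,L_0)^L$ and the formula above is unchanged. Combining the two inequalities yields the equality.
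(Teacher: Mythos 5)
Your proof is correct and follows essentially the same route as the paper: the upper bound because every zero-error generator set yields pairwise distinguishable sequences in $\mathcal{X}^L$ (so $\#\mathcal{C}^*_{[L]} \le N(W,L)$), and the lower bound because each maximum independent set of $G_W^{\boxtimes L_0}$ is itself a fixed-length generator set with $\log\nu(\mathcal{C}) = \tfrac{1}{L_0}\log N(W,L_0)$. You merely spell out the details (including the $\gcd$ convention) that the paper's one-line proof leaves implicit.
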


\begin{proof}[Proposition \ref{prop:C0}]
On one hand, variable length codes cannot achieve a better rate than $C_0(W)$. On the other hand, the fixed length codes, i.e. subsets of $\mathcal{X}^L$, are included in the set of variable length codes, so we have
\begin{equation}
    C_0(W) \geq \sup_\mathcal{C} \log \nu(\mathcal{C}) \geq \sup_L \frac{\log \alpha(G_W^{\boxtimes L})}{L} = C_0(W).
\end{equation}
\end{proof}

Thus finding an optimal variable-length code is equivalent to finding a family of fixed-length codes with optimal supremum rate. For some scenario, a variable-length code might be far easier to describe than an infinite family of fixed-length codes.

\section{Intermingled coding}\label{sec:Intermingled}

In this section, we generalize the previous variable-length coding scheme, by revisiting Shannon's intermingled coding scheme \cite[proof of Th. 4]{shannon56}. Instead of allowing only the concatenation of words taken from the generator set $\mathcal{C} \subseteq \mathcal{X}^*$, the intermingled coding scheme allows to stop the transmission of one word $c\in \mathcal{C}$ and resume it later, in an "intermingling" pattern, so that additional information is embedded on positions of such stops.


\begin{definition}[Intermingled codes]\label{def:IntermingledCodes}
An intermingled code is composed of a generator set $\mathcal{C} \subseteq \mathcal{X}^*$ that is zero-error for the channel $W$, and a succession rule
\begin{equation}
    \rho : \prod_{c\in\mathcal{C}} \llbracket 0, |c|-1\rrbracket \rightarrow \mathcal{P}(\mathcal{C}),
\end{equation}
such that $\rho(z)$ is nonempty for all $z \in \prod_{c\in\mathcal{C}} \llbracket 0, |c|-1\rrbracket$, which maps the possible transmission states to the possible transmittable characters at the next time step. The encoder chooses a time horizon $L$, then maps the message to some sequence $(x_l)_{l \leq L} \in \mathcal{X}^L$ based on the following algorithm :

\begin{itemize}[label=$\smallblackdiamond$]
    \item Initialize $z \leftarrow (0,...,0)$ vector of size $\#\mathcal{C}$ and $l \leftarrow 0$
    \item While $l \leq L$ :
    \begin{itemize}[label=$\smalldiamond$]
        \item Choose a word $w \in \rho(z)$ $(\subseteq \mathcal{C})$
        \item Transmit $x_l = w_{z_w + 1}$ over the channel
        \item $z_w \leftarrow z_w + 1 \mod |w|$
        \item $l \leftarrow l + 1$
    \end{itemize}
\end{itemize}

An intermingled code $(\mathcal{C}, \rho)$ is zero-error for the channel $W$ if all the possible sequences generated by this algorithm are distinguishable, i.e. for all generated codewords $x$ and $x'$ from $\mathcal{X}^L$, there exists a time step $l \leq L$ such that $x_l x'_l \neq \mathcal{E}(G_W)$.
\end{definition}

\begin{remark}
The choice : 
\begin{equation}
    \rho : z \mapsto \begin{cases}
    \mathcal{C} & \text{if }z = (0,...,0), \\
    \lbrace c \in \mathcal{C} \:|\: z_c \neq 0 \rbrace & \text{otherwise,}
    \end{cases}
\end{equation}
corresponds to the variable-length coding scheme of Sec. \ref{sec:VariableLength}, based on the simple concatenation of words. The variable-length codes are a subclass of intermingled codes.
\end{remark}

\begin{definition}[Asymptotic rate of intermingled codes]\label{def:IntermingledCodeRate}
The asymptotic rate of an intermingled code $(\mathcal{C}, \rho)$ is defined by 
\begin{equation}
    r(\mathcal{C}, \rho) \doteq \lim_{L\rightarrow\infty} \frac{1}{L} \log\#\mathcal{S}_L,
\end{equation}
where $\mathcal{S}_L $ denotes the set of channel input sequences $x\in \mathcal{X}^L$ that are generated by the algorithm described in Definition \ref{def:IntermingledCodes} with time horizon $L$, with $z_w = (0,...,0)$ at the last time step. We also define the average number of transmitted symbols per channel use :
\begin{equation}
    \nu(\mathcal{C},\rho) \doteq 2^{r(\mathcal{C}, \rho)}
\end{equation}

The existence of the limit is given by Fekete's lemma, as $\log \#\mathcal{S}_{L+L'} \geq \log \#\mathcal{S}_{L} + \log \#\mathcal{S}_{L'}$ for all $L,L' \in \mathbb{N}$. Similarly to the Definition \ref{def:RateVariableLengthCodes}, if there exists a nonempty maximal family of integers $(k_i)_{i \in \mathcal{I}}$ such that $d = \gcd(k_i,\; i \in \mathcal{I}) \neq 1$ with $\#\mathcal{S}_{k_i L} = 0$ for all $i \in \mathcal{I}$ and $L \in \mathbb{N}$, then Fekete's lemma cannot be applied directly and we redefine the rate by 
\begin{equation}
    r(\mathcal{C}, \rho) \doteq \lim_{L \rightarrow \infty} \frac{1}{dL} \log\#\mathcal{S}_{dL},\label{eq:ratePGCD}
\end{equation}
and we take again $\nu(\mathcal{C}) \doteq 2^{r(\mathcal{C}, \rho)}$ with this new definition.
\end{definition}

\begin{definition}[Transition graph]\label{def:TransitionGraph}
Let $(\mathcal{C}, \rho)$ be a zero-error intermingled code, we define its transition graph $G = \big(\mathcal{V}(G), \mathcal{E}(G)\big)$ by
\begin{align}
&    \mathcal{V} = \prod_{c\in\mathcal{C}} \llbracket 0, |c|-1\rrbracket, \\
&    \forall v,v' \in \mathcal{V},\; vv' \in \mathcal{E} \;\;\textit{ if }\;\; 
    \exists i \in \rho(v), \;\forall j \leq \overline{l}, \; v'_j = v_j + \mathds{1}_{i = j} \mod j,
\end{align}
i.e. $v'$ can be obtained by adding 1 to only one of the components of $v$ that is included in $\rho(v)$, modulo $|v|$.
\end{definition}

\begin{theorem}[Rate computation of intermingled codes]\label{theo:Intermingled}
We consider the intermingled code $(\mathcal{C}, \rho)$ that is zero-error for the channel $W$. The asymptotic rate satisfies
\begin{equation}
    \begin{gathered}
        r(\mathcal{C}, \rho) = \log \max_i |\lambda_i(M_G)|,
    \end{gathered}
\end{equation}
where $(\lambda_i)_{i \leq \#\mathcal{V}}$ are the elements of the spectrum of $M_G$, which is the adjacency matrix of $G$ the transition graph.
\end{theorem}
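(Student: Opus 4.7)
The plan is to reinterpret $\#\mathcal{S}_L$ as the number of closed walks of length $L$ based at the vertex $v_0 := (0,\ldots,0)$ in the transition graph $G$, and then extract the asymptotic rate from the spectrum of the adjacency matrix $M_G$. The key object to study is $(M_G^L)_{v_0,v_0}$, the $(v_0,v_0)$ entry of the $L$-th matrix power.

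First, I would identify every execution of the encoding algorithm with a walk in $G$: the state $z$ before a given time step is a vertex of $G$, and the choice of a word $w \in \rho(z)$ selects the outgoing edge that increments the $w$-component of $z$ modulo $|w|$. Since distinct words in $\rho(z)$ increment distinct components, distinct choices produce distinct edges, so algorithm executions of $L$ steps starting at $v_0$ are in bijection with walks of length $L$ starting at $v_0$; requiring the algorithm to return to the zero state at the last step further restricts to walks closed at $v_0$.

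Next, I would use the zero-error hypothesis to upgrade this bijection to the level of channel input sequences. Each walk deterministically produces one sequence in $\mathcal{X}^L$ via $x_l = w_{z_w + 1}$ at step $l$; conversely, two distinct walks must produce distinct (in fact, distinguishable) sequences, for otherwise two distinct messages would map to codewords that are not pairwise distinguishable, contradicting the zero-error property of $(\mathcal{C},\rho)$. Consequently $\#\mathcal{S}_L = (M_G^L)_{v_0,v_0}$.

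Finally, I would invoke the spectral asymptotics of non-negative matrix powers. Writing the Jordan decomposition of $M_G$, one obtains $(M_G^L)_{v_0,v_0} = \sum_i P_i(L)\,\lambda_i^L$, with $\lambda_i$ the eigenvalues of $M_G$ and $P_i$ polynomials in $L$; taking the $L$-th root and the logarithm then yields $\frac{1}{L}\log \#\mathcal{S}_L \to \log \max_i |\lambda_i(M_G)|$. The main obstacle will be the technical subtlety that the dominant term of $(M_G^L)_{v_0,v_0}$ attains the spectral radius only when $v_0$ belongs to a strongly connected component carrying that radius; I would handle this by first restricting $G$ to the subgraph of vertices both accessible from and returning to $v_0$ (the only vertices that contribute to closed walks at $v_0$), and by treating a possible periodicity via the gcd-adjusted definition \eqref{eq:ratePGCD}, in direct analogy with the variable-length case of Theorem \ref{theo:VariableLength}.
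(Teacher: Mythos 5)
Your proposal is correct and follows essentially the same route as the paper's proof in App.~\ref{sec:ProofTheoremIntermingled}: identify $\#\mathcal{S}_L$ with the number of closed walks of length $L$ at the all-zero state, i.e.\ the diagonal entry $(M_G^L)_{(0,\ldots,0)(0,\ldots,0)}$, and then extract the exponential growth rate from the spectrum of $M_G$ (the paper does the last step via Gelfand's formula for $\sqrt[L]{\|M_G^L\|}$ plus the observation that the normalized diagonal term is negligible at exponential scale, whereas you expand the diagonal entry through a Jordan decomposition). Your explicit treatment of the two points the paper leaves implicit --- injectivity of the walk-to-sequence map, and the fact that the diagonal entry only sees the spectral radius of the part of $G$ both accessible from and co-accessible to the all-zero state, which the paper dispatches with the remark that the inner product ``does not converge to zero because of the positive rate'' --- is a refinement of, not a departure from, the paper's argument.
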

The proof of Theorem \ref{theo:Intermingled} is stated in App. \ref{sec:ProofTheoremIntermingled}. 
In Sec. \ref{sec:Example}, we provide an example based on the channel graph $C_5 \boxplus \mathbf{1}$, for which we compute explicitly the asymptotic rate $r(\mathcal{C})$ of the intermingled code.

\begin{remark}[Optimality of intermingled codes]
In \cite[Theorems 3 and 4]{shannon56}, Shannon proves the optimality of intermingled codes for two parallel channels, when an adjacency-reducing mapping can be built over one of the two channels.
\end{remark}

Given the generator set of an intermingled code $\mathcal{C}$, one can define the maximal succession rule as
\begin{align}
&\argmax_{\rho} r(\mathcal{C}, \rho) \\
\text{s.t. }& (\mathcal{C}, \rho) \text{ is zero-error}.
\end{align}

Thus finding the optimal rate over the set of intermingled codes boils down to an optimization problem over $\mathcal{P}(\mathcal{C})^{\prod_{c\in\mathcal{C}} \llbracket 0, |c|-1\rrbracket}$ : the succession rule is in general far too complex to be described, in order to make this approach tractable. The main interest lies instead in the possible existence of an optimal intermingled generator of the family of fixed-length codes that asymptotically reaches the capacity, even if no finite fixed-length code can reach it.

\section{Example}\label{sec:Example}

Let us consider the channel graph $C_5 \boxplus \mathbf{1}$, depicted in Fig. \ref{fig:ChannelGraphC5+1} with vertices $\lbrace 1,2,3,4,5 \rbrace \cup \lbrace 0 \rbrace$. We consider two generator sets $\mathcal{C} = \lbrace 0 \rbrace \cup \lbrace 11, 23, 35, 42, 54 \rbrace$ and 
$\mathcal{C}' =  \lbrace 11, 23, 35, 42, 54 \rbrace\cup \lbrace 001, 003 \rbrace $, that are zero-error for any channel $W$ whose graph is $C_5 \boxplus \mathbf{1}$.

\begin{figure}[ht!]
    \centering
    \captionsetup{justification=centering}
    \begin{tikzpicture}
\foreach \x in {0,72,144,216,288} {
        \drawPolarLinewithBG{\x:1}{\x + 72:1};
    }
\node (0) at (0,0) [shape=circle, fill = white, draw=black, inner sep=1pt] {0};
\node (5) at (0:1) [shape=circle, fill = white, draw=black, inner sep=1pt] {5};
\node (1) at (72:1) [shape=circle, fill = white, draw=black, inner sep=1pt] {1};
\node (2) at (144:1) [shape=circle, fill = white, draw=black, inner sep=1pt] {2};
\node (3) at (216:1) [shape=circle, fill = white, draw=black, inner sep=1pt] {3};
\node (4) at (288:1) [shape=circle, fill = white, draw=black, inner sep=1pt] {4};
\end{tikzpicture}
    \caption{The channel graph $C_5 \boxplus \mathbf{1}$}
    \label{fig:ChannelGraphC5+1}
\end{figure}
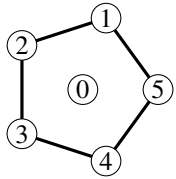

\subsection{Variable-length coding scheme}\label{sec:ExVariableLengthCoding}
The variable-length code obtain with the generator set $\mathcal{C} = \lbrace 0 \rbrace \cup \lbrace 11, 23, 35, 42, 54 \rbrace$ is depicted in Fig. \ref{fig:ChannelInputs}. The maximal length, (resp. the minimum length), is $\overline{l}=2$, (resp. $\underline{l}=1$). By Theorem \ref{theo:VariableLength}, the asymptotic rate of the variable-length code is $r(\mathcal{C}) = \log\big(\nu(\mathcal{C})\big)$ where $\nu(\mathcal{C})$ is solution of the characteristic polynomial $X^2 - X - 5=0$. This gives us $\nu(\mathcal{C}) = \frac{1 + \sqrt{21}}{2} \simeq 2.791$ and a rate $r(\mathcal{C}) \simeq 1.422$, which is inferior to the zero-error capacity of this channel $C_0 = \log(1 + \sqrt{5}) \simeq 1.694$ and also $1 + \sqrt{5}\simeq 3.236$, obtained by combining Lovasz capacity result for the graph $C_5$ in \cite{lovasz79}, with the result of Shannon for adjacency reducing mapping, in \cite{shannon56}.

\begin{lemma}[Small number of channel uses]\label{lemma:FewChannelUses}
For a small number of channel uses and $\mathcal{C} = \lbrace 0 \rbrace \cup \lbrace 11, 23, 35, 42, 54 \rbrace$, the transmission rates are given by\\

\begin{tabular}{|c||c|c|c|c|}
\hline
Channel uses: $L$ & $2$  &  $3$   &  $4$  &  $5$  \\
 \hline
$\sqrt[L]{\#\mathcal{C}^*_{[L]}} $   &
$ \sqrt{6} \simeq 2.449 $&
$ \sqrt[3]{11} \simeq 2.224$ &
$ \sqrt[4]{41} \simeq 2.530$ &
$ \sqrt[5]{96} \simeq 2.491$\\
\hline
\end{tabular}
\end{lemma}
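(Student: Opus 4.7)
The plan is to apply directly the linear difference formulation given in the remark following Theorem \ref{theo:VariableLength}. With the generator set $\mathcal{C} = \lbrace 0 \rbrace \cup \lbrace 11, 23, 35, 42, 54 \rbrace$, the relevant parameters are $\underline{l}=1$, $\overline{l}=2$, $\#\mathcal{C}_{[1]} = 1$, and $\#\mathcal{C}_{[2]} = 5$, so the recursion specializes to
\begin{equation}
    \#\mathcal{C}^*_{[L]} = \#\mathcal{C}^*_{[L-1]} + 5 \cdot \#\mathcal{C}^*_{[L-2]}.
\end{equation}
The initial conditions are $\#\mathcal{C}^*_{[0]} = 1$ (the empty concatenation) and $\#\mathcal{C}^*_{[1]} = 1$ (the single word $0$, since the length-$2$ words contribute nothing at length $1$).

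From these initial values I would just unroll the recursion step by step: $\#\mathcal{C}^*_{[2]} = 1 + 5 = 6$, $\#\mathcal{C}^*_{[3]} = 6 + 5 = 11$, $\#\mathcal{C}^*_{[4]} = 11 + 30 = 41$, and $\#\mathcal{C}^*_{[5]} = 41 + 55 = 96$. Taking the corresponding $L$-th roots immediately reproduces the entries $\sqrt{6}$, $\sqrt[3]{11}$, $\sqrt[4]{41}$ and $\sqrt[5]{96}$ of the table, together with the claimed decimal approximations.

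The only small point that deserves a brief justification, rather than being purely mechanical, is that the recursion above truly counts $\#\mathcal{C}^*_{[L]}$ without over- or undercounting: each element of $\mathcal{C}^*_{[L]}$ has a unique prefix word drawn from $\mathcal{C}$, because the generator set $\mathcal{C}$ is zero-error and in particular its elements are not prefixes of one another (the length-$2$ words begin with distinct letters $1,2,3,4,5$, none of which equals $0$). This unique-decomposition observation is the only non-computational input needed; everything else is the iteration displayed above. There is no genuine obstacle here—the lemma is essentially a sanity check on Theorem \ref{theo:VariableLength} at small $L$.
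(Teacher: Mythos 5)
Your proof is correct, but it follows a different route from the paper's. The paper proves this lemma by brute force: for each $L=2,3,4,5$ it explicitly lists the words of $\mathcal{C}^*_{[L]}$, grouped by concatenation pattern (e.g.\ for $L=5$ the patterns $0^5$, $0^3\!\times\![5]$, $0^2\!\times\![5]\!\times\!0$, \dots, $[5]^2\!\times\!0$, summing $1+5+5+5+25+5+25+25=96$), and simply counts them. You instead invoke the linear difference formulation $\#\mathcal{C}^*_{[L]} = \#\mathcal{C}^*_{[L-1]} + 5\,\#\mathcal{C}^*_{[L-2]}$ with initial values $\#\mathcal{C}^*_{[0]}=\#\mathcal{C}^*_{[1]}=1$ and unroll it, which gives the same numbers $6,11,41,96$. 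Your extra step — checking that the recursion neither over- nor undercounts because the first generator word of any element of $\mathcal{C}^*_{[L]}$ is uniquely determined (the length-$2$ generators start with the distinct letters $1,\dots,5$, none equal to $0$, so $\mathcal{C}$ is prefix-free) — is exactly the justification needed, since the paper states the recursion only in a remark without proof; note that this prefix-freeness in fact follows from the zero-error condition itself, given the auto-adjacency convention in Definition \ref{def:VariableLengthCodes}. The trade-off: the paper's enumeration is self-contained and requires no structural claim about $\mathcal{C}$, while your recursion-based argument is shorter, less error-prone, and extends immediately to larger $L$ (and is the computation that underlies Fig.~\ref{fig:TransmissionRates} anyway).
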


\begin{figure}[h!]
    \centering
    \captionsetup{justification=centering}
    \begin{tikzpicture}[scale=.75]
        \node[draw=none] at (-4,-0.15) {$\vdots$};
        \node[draw=none] at (0,-0.15) {$\vdots$};
        \node[draw=none] at (4,-0.15) {$\vdots$};
        \foreach \x in {0,0.6,1.2,1.8,2.4,4.5,-1,-1.8,-2.6,-3.4,-4.2,-0.6,-1.4,-2.2,-3,-3.8,3}{
        \foreach \y in {0.1,0.05,0,-0.05,-0.1}{
            \draw[draw=orange, thick, dashed] (\x+\y,-1) -- (\x+\y,-0.5);
        }
        }\foreach \x in {0,0.6,1.2,1.8,2.4,4.5,-1,-1.8,-2.6,-3.4,-4.2}{
            \draw[draw=gray, thick, dashed] (\x+0.175,-1) -- (\x+0.175,-0.5);
        }
        \foreach \x in {-0.6,-1.4,-2.2,-3,-3.8,3}{
        \foreach \y in {0.1,0.05,0,-0.05,-0.1}{
            \draw[draw=orange, thick] (\x+\y,-2.5) -- (\x+\y,-1);
        }
        }
        \draw[draw=black, dashed] (-5,-1) -- (5.5,-1);
        \draw[draw=black, dashed] (-5,-2.5) -- (5.5,-2.5);
        \draw[draw=black, dashed] (-5,-4) -- (5.5,-4);
        \draw[draw=black, dashed] (-5,-5.5) -- (5.5,-5.5);
        \node[fill=white, draw=none, shape=circle, inner sep=1pt] (0) at (6,-1) {$\#\mathcal{C}^*_{[3]} = 11$};
        \node[fill=white, draw=none, shape=circle, inner sep=1pt] (0) at (6,-2.5) {$\#\mathcal{C}^*_{[2]} = 6$};
        \node[fill=white, draw=none, shape=circle, inner sep=1pt] (0) at (6,-4) {$\#\mathcal{C}^*_{[1]} = 1$};
        \node[fill=white, draw=none, shape=circle, inner sep=1pt] (0) at (6,-5.5) {$\#\mathcal{C}^*_{[0]} = 1$};
        \node[fill=white, draw=black, shape=circle, inner sep=1pt] (0) at (0,-5.5) {$\epsilon$};
        \node[fill=white, draw=black, shape=circle, inner sep=1pt] (1) at (1.5,-4) {\textcolor{gray}{$0$}};
        \node[fill=white, draw=black, shape=circle, inner sep=1pt] (I_1) at (-0.6,-2.5) {\textcolor{orange}{$11$}};
        \node[fill=white, draw=black, shape=circle, inner sep=1pt] (I_2) at (-1.4,-2.5) {\textcolor{orange}{$23$}};
        \node[fill=white, draw=black, shape=circle, inner sep=1pt] (I_3) at (-2.2,-2.5) {\textcolor{orange}{$35$}};
        \node[fill=white, draw=black, shape=circle, inner sep=1pt] (I_4) at (-3,-2.5) {\textcolor{orange}{$42$}};
        \node[fill=white, draw=black, shape=circle, inner sep=1pt] (I_5) at (-3.8,-2.5) {\textcolor{orange}{$54$}};
        \draw[draw=orange, thick, ->] (0) edge (I_1);
        \draw[draw=orange, thick, ->] (0) edge (I_2);
        \draw[draw=orange, thick, ->] (0) edge (I_3);
        \draw[draw=orange, thick, ->] (0) edge (I_4);
        \draw[draw=orange, thick, ->] (0) edge (I_5);
        \draw[draw=gray, thick, ->] (0) edge (1);
        \node[fill=white, draw=black, shape=circle, inner sep=0.75pt] (J_1) at (0,-1) {\footnotesize\textcolor{gray}{$0$}\textcolor{orange}{$54$}};
        \node[fill=white, draw=black, shape=circle, inner sep=0.75pt] (J_2) at (0.6,-1) {\footnotesize\textcolor{gray}{$0$}\textcolor{orange}{$42$}};
        \node[fill=white, draw=black, shape=circle, inner sep=0.75pt] (J_3) at (1.2,-1) {\footnotesize\textcolor{gray}{$0$}\textcolor{orange}{$35$}};
        \node[fill=white, draw=black, shape=circle, inner sep=0.75pt] (J_4) at (1.8,-1) {\footnotesize\textcolor{gray}{$0$}\textcolor{orange}{$23$}};
        \node[fill=white, draw=black, shape=circle, inner sep=0.75pt] (J_5) at (2.4,-1) {\footnotesize\textcolor{gray}{$0$}\textcolor{orange}{$11$}};
        \node[fill=white, draw=black, shape=circle, inner sep=0.75pt] (J_6) at (4.5,-1) {\footnotesize\textcolor{gray}{$000$}};
        \draw[draw=orange, thick, ->] (1) edge (J_1);
        \draw[draw=orange, thick, ->] (1) edge (J_2);
        \draw[draw=orange, thick, ->] (1) edge (J_3);
        \draw[draw=orange, thick, ->] (1) edge (J_4);
        \draw[draw=orange, thick, ->] (1) edge (J_5);
        \node[fill=white, draw=black, shape=circle, inner sep=0.75pt] (I_6) at (3,-2.5) {\textcolor{gray}{$00$}};
        \draw[draw=gray, thick, ->] (1) edge (I_6);
        \draw[draw=gray, thick, ->] (I_6) edge (J_6);
        \node[fill=white, draw=black, shape=circle, inner sep=0.75pt] (K_1) at (-1,-1) {\footnotesize\textcolor{orange}{$11$}\textcolor{gray}{$0$}};
        \node[fill=white, draw=black, shape=circle, inner sep=0.75pt] (K_2) at (-1.8,-1) {\footnotesize\textcolor{orange}{$23$}\textcolor{gray}{$0$}};
        \node[fill=white, draw=black, shape=circle, inner sep=0.75pt] (K_3) at (-2.6,-1) {\footnotesize\textcolor{orange}{$35$}\textcolor{gray}{$0$}};
        \node[fill=white, draw=black, shape=circle, inner sep=0.75pt] (K_4) at (-3.4,-1) {\footnotesize\textcolor{orange}{$42$}\textcolor{gray}{$0$}};
        \node[fill=white, draw=black, shape=circle, inner sep=0.75pt] (K_5) at (-4.2,-1) {\footnotesize\textcolor{orange}{$54$}\textcolor{gray}{$0$}};
        \draw[draw=gray, thick, ->] (I_1) edge (K_1);
        \draw[draw=gray, thick, ->] (I_2) edge (K_2);
        \draw[draw=gray, thick, ->] (I_3) edge (K_3);
        \draw[draw=gray, thick, ->] (I_4) edge (K_4);
        \draw[draw=gray, thick, ->] (I_5) edge (K_5);
    \end{tikzpicture}
    \caption{Set of channel input sequences obtained by concatenation of words from the generator set $\mathcal{C} = \lbrace 0 \rbrace \cup \lbrace 11, 23, 35, 42, 54 \rbrace$.}
    \label{fig:ChannelInputs}
\end{figure}

Let us now consider the generator set $\mathcal{C}' =  \lbrace 11, 23, 35, 42, 54 \rbrace\cup \lbrace 001, 003 \rbrace $ that is also zero-error for the  graph is $C_5 \boxplus \mathbf{1}$. The maximal length, (resp. the minimum length), of words in $\mathcal{C}$ is $\overline{l}=3$, (resp. $\underline{l}=2$). The corresponding variable-length code achieves a rate $r(\mathcal{C}') = \log\big(\nu(\mathcal{C}')\big)$ where $\nu(\mathcal{C}')$ is solution of the characteristic polynomial $X^3 - 5X - 2=0$. This gives us $\nu(\mathcal{C}') = 1+\sqrt{2} \simeq 2.414$ and a rate $r(\mathcal{C}') \simeq 1.272$, which is also inferior to the know zero-error capacity of this channel $C_0 = \log(1 + \sqrt{5}) \simeq 1.694$ and also $1 + \sqrt{5} \simeq  3.236$.

\begin{lemma}\label{lemma:FewChannelUses2}
For a small number of channel uses and $\mathcal{C}' =  \lbrace 11, 23, 35, 42, 54 \rbrace\cup \lbrace 001, 003 \rbrace $, the transmission rates are given by\\

\begin{tabular}{|c||c|c|c|c|}
\hline
Channel uses: $L$ &  $3$   &  $4$  &  $5$  \\
 \hline
$\sqrt[L]{\#\mathcal{C}'^*_{[L]}} $   &
$\sqrt[3]{2} \simeq 1.260$ &
$\sqrt[4]{25} \simeq 2.236$ &
$\sqrt[5]{20} \simeq 1.821$\\
\hline
\end{tabular}
\end{lemma}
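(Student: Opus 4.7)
The plan is to apply the linear difference relation stated in the remark following Theorem \ref{theo:VariableLength} to the generator set $\mathcal{C}'$. Because every element of $\mathcal{C}'^*_{[L]}$ decomposes uniquely as a length-$l$ generator followed by a word of $\mathcal{C}'^*_{[L-l]}$, one has
\[
\#\mathcal{C}'^*_{[L]} \;=\; \sum_{l=\underline{l}}^{\overline{l}} \#\mathcal{C}'_{[l]}\, \#\mathcal{C}'^*_{[L-l]}.
\]
For $\mathcal{C}' = \lbrace 11, 23, 35, 42, 54\rbrace \cup \lbrace 001, 003\rbrace$ the only nonzero coefficients are $\#\mathcal{C}'_{[2]}=5$ and $\#\mathcal{C}'_{[3]}=2$, so the recurrence collapses to
\[
\#\mathcal{C}'^*_{[L]} \;=\; 5\,\#\mathcal{C}'^*_{[L-2]} \,+\, 2\,\#\mathcal{C}'^*_{[L-3]}.
\]

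Next I would pin down the initial values needed to seed the recurrence: $\#\mathcal{C}'^*_{[0]} = 1$ (the empty concatenation $\epsilon$), $\#\mathcal{C}'^*_{[1]} = 0$ (no generator has length $1$, so no concatenation can yield a length-one word), and $\#\mathcal{C}'^*_{[2]} = 5$ (the five length-two generators). Unrolling the recurrence for $L = 3, 4, 5$ then yields $\#\mathcal{C}'^*_{[3]} = 5\cdot 0 + 2\cdot 1 = 2$, $\#\mathcal{C}'^*_{[4]} = 5\cdot 5 + 2\cdot 0 = 25$, and $\#\mathcal{C}'^*_{[5]} = 5\cdot 2 + 2\cdot 5 = 20$. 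Taking $L$-th roots reproduces the three entries of the table.

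I do not anticipate any substantive obstacle; the only point worth mentioning is the unique-decomposition property underlying the recurrence. It is immediate for $\mathcal{C}'$ since the length-$2$ generators all start with a symbol in $\lbrace 1,2,3,4,5\rbrace$ while the length-$3$ generators all start with $0$, so the first generator in any concatenation can be identified just from its initial symbol. The remainder is pure arithmetic; the same scheme, with initial conditions $\#\mathcal{C}^*_{[0]}=1$, $\#\mathcal{C}^*_{[1]}=1$ and recurrence $\#\mathcal{C}^*_{[L]} = \#\mathcal{C}^*_{[L-1]} + 5\,\#\mathcal{C}^*_{[L-2]}$, would likewise recover Lemma \ref{lemma:FewChannelUses}.
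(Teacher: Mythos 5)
Your proof is correct, but it follows a different route from the paper's. The paper establishes the three table entries by direct enumeration of the concatenation patterns: for $L=3$ the only words are the length-$3$ generators $\lbrace 001,003\rbrace$ (giving $2$), for $L=4$ the pairs from $\lbrace 11,23,35,42,54\rbrace^2$ (giving $25$), and for $L=5$ the mixed products $\lbrace 11,\dots,54\rbrace\times\lbrace 001,003\rbrace$ and $\lbrace 001,003\rbrace\times\lbrace 11,\dots,54\rbrace$ (giving $10+10=20$). You instead invoke the linear difference relation $\#\mathcal{C}'^*_{[L]} = 5\,\#\mathcal{C}'^*_{[L-2]} + 2\,\#\mathcal{C}'^*_{[L-3]}$ from the remark after Theorem \ref{theo:VariableLength}, seed it with $\#\mathcal{C}'^*_{[0]}=1$, $\#\mathcal{C}'^*_{[1]}=0$, $\#\mathcal{C}'^*_{[2]}=5$, and unroll; your values $2, 25, 20$ agree with the paper's (and with the initial data $(0,5,2)$ the paper itself uses later when analyzing the non-dominant eigenvalues). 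Your justification of the recurrence is the right one and is a point the paper glosses over: the count is over words, not generator sequences, so one needs unique factorization, which you correctly get from the fact that the first symbol ($0$ versus nonzero) identifies the length, and hence the identity, of the leading generator. What your approach buys is a uniform mechanism valid for every $L$ (and, as you note, it also recovers Lemma \ref{lemma:FewChannelUses} from $\#\mathcal{C}^*_{[L]}=\#\mathcal{C}^*_{[L-1]}+5\,\#\mathcal{C}^*_{[L-2]}$), at the cost of having to verify the unique-decomposition property; the paper's enumeration is self-contained, exhibits the codewords explicitly, and needs no such verification for these small values of $L$.
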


\begin{figure}[h!]
    \centering
    \captionsetup{justification=centering}
    \includegraphics[width=9.5cm]{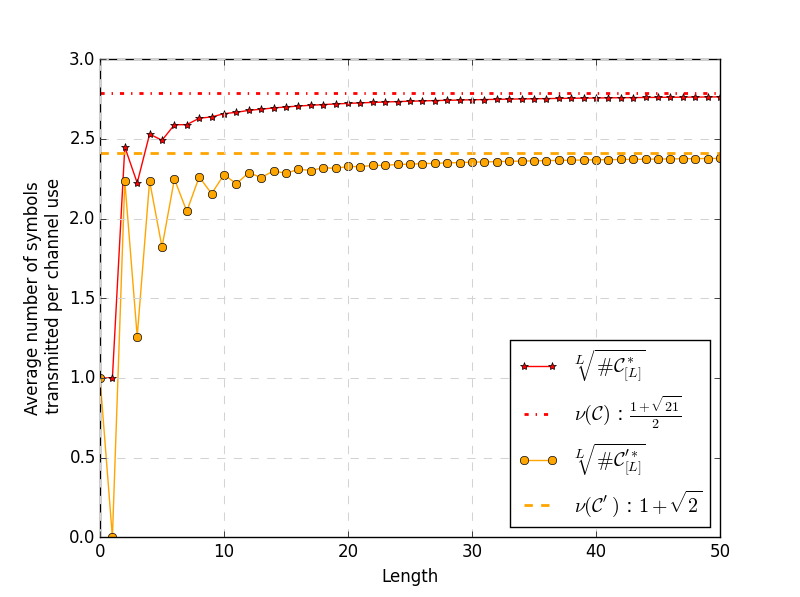}
    \caption{Transmission rates of the generated fixed-length codes corresponding to the generator sets $\mathcal{C} = \lbrace 0,11,23,35,42,54 \rbrace$ and $\mathcal{C}' = \lbrace 11,23,35,42,54,001,003 \rbrace$, with their respective limits.}
    \label{fig:TransmissionRates}
\end{figure}

\begin{remark}[One-shot maximum independent set]
Coding on the one-shot maximum independent set with $\mathcal{C}'' \doteq \lbrace 0,1,3 \rbrace$ achieves a better rate than with the generator sets $\mathcal{C}$ and $\mathcal{C}'$, as $\nu(\mathcal{C}'') = 3$.
\end{remark} 

\subsection{Impact of non-dominant eigenvalues}
As shown in the proof of Theorem \ref{theo:VariableLength} in App. \ref{sec:ProofTheoremVariableLength}, the asymptotic rate can be read in the dominant eigenvalue of the adjacency matrix of the transition  graph. Studying the other eigenvalues gives the exact shape and decay-rate of the teeth patterns in the rate curves. A closed-form expression can be obtained with a Jordan decomposition of the adjacency matrix.

For example, the code $\mathcal{C}'$ generates the following transition matrix :
\begin{equation}
    \begin{gathered}
        \left(\!\begin{array}{ccc}
            0 & 1 & 0 \\
            0 & 0 & 1 \\
            2 & 5 & 0
        \end{array}\!\right) = P \left(\!\begin{array}{ccc}
            -2 & 0 & 0 \\
            0 & 1 - \sqrt{2} & 0 \\
            0 & 0 & 1 + \sqrt{2}
        \end{array}\!\right) P^{-1}.
    \end{gathered}
\end{equation}
We have for all $L \in \mathbb{N}$ :
\begin{equation}
    \begin{gathered}
        \left(\!\!\begin{array}{ccc}
            \#\mathcal{C}'^*_{[L+1]} \\
            \#\mathcal{C}'^*_{[L+2]} \\
            \#\mathcal{C}'^*_{[L+3]}
        \end{array}\!\!\right) = P \left(\!\begin{array}{ccc}
            -2 & 0 & 0 \\
            0 & 1 - \sqrt{2} & 0 \\
            0 & 0 & 1 + \sqrt{2}
        \end{array}\!\right)^L P^{-1} \left(\!\!\begin{array}{ccc}
            \#\mathcal{C}'^*_{[1]} \\
            \#\mathcal{C}'^*_{[2]} \\
            \#\mathcal{C}'^*_{[3]}
        \end{array}\!\!\right).
    \end{gathered}
\end{equation}
The general solution for linear difference equations \cite[pp. 13]{greene2007mathematics}, writes
\begin{equation}
    \#\mathcal{C}'^*_{[L]} = h_1 (1+\sqrt{2})^L + h_2 (-2)^L + h_3(1-\sqrt{2})^L,
\end{equation}
where the parameters $(h_1, h_2, h_3)\in\R^3$ are determined by the initial values $(\#\mathcal{C}'^*_{[1]}, \#\mathcal{C}'^*_{[2]},           \#\mathcal{C}'^*_{[3]}) = (0,5,2)$. Straightforward computations lead to 
\begin{equation}
    \#\mathcal{C}'^*_{[L]} = \frac{6+5\sqrt{2}}{28}(1+\sqrt{2})^L + \frac{4}{7}(-2)^L + \frac{6-5\sqrt{2}}{28}(1-\sqrt{2})^L.
\end{equation}

\begin{figure}[h!]
    \centering
    \captionsetup{justification=centering}
    \includegraphics[width=9.5cm]{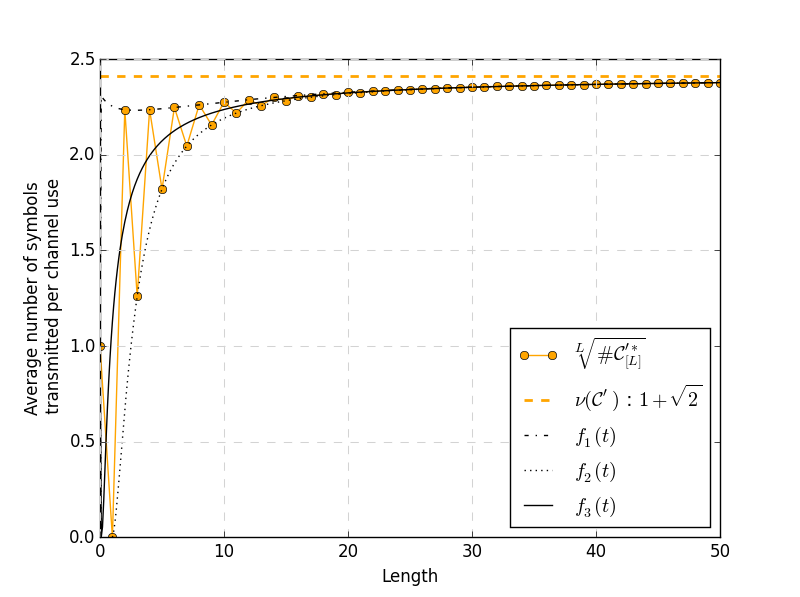}
    \caption{Value of $\sqrt[L]{\#\mathcal{C}'^*_{[L]}}$ with its oscillations}
    \label{fig:charp2}
\end{figure}

We observe two oscillating terms in this expression, which are asymptotically dominated by $(1+\sqrt{2})^L$, creating the vanishing teeth-shaped patterns on the rate-length curves. The oscillations' amplitude decrease at the rate $\left(\frac{|-2|}{|1+\sqrt{2}|}\right)^L = \left(2\sqrt{2}-2\right)^L \simeq 0.828^L$. Since $\#\mathcal{C}'^*_{[L]} = h_1\nu(\mathcal{C}')^L + O\big((-2)^L\big)$ with $h_1 = \frac{6+5\sqrt{2}}{28} \neq 1$ and $\left| \frac{-2}{\nu(\mathcal{C}')}\right|< 1$ we have
\begin{align}
    \sqrt[L]{\#\mathcal{C}'^*_{[L]}} - \nu(\mathcal{C}') = &  e^{\frac{1}{L}\ln(\#\mathcal{C}'^*_{[L]})}-e^{\ln(\nu(\mathcal{C}'))} \sim e^{\ln(\nu(\mathcal{C}'))}\left(\frac{1}{L}\ln(\#\mathcal{C}'^*_{[L]})-\ln(\nu(\mathcal{C}'))\right) \\ 
    \sim & \frac{\nu(\mathcal{C}')^2}{L}\ln(h_1)+\nu(\mathcal{C}')\ln\left(1+O\left(\frac{(-2)^L}{\nu(\mathcal{C}')^L}\right)\right) \sim \frac{\ln(h_1) \nu(\mathcal{C}')^2}{L}.
\end{align}
Thus the speed of convergence towards the asymptotic rate is $O(\frac{1}{L})$.

In Fig. \ref{fig:charp2}, we represent $\sqrt[L]{\#\mathcal{C}'^*_{[L]}}$ along with the three following functions for $t \in (0,50]$ that capture its oscillations.
\begin{align}
    f_1(t) =& \left(\frac{6+5\sqrt{2}}{28}(1+\sqrt{2})^t +\frac{4}{7}2^t + \frac{6-5\sqrt{2}}{28}(\sqrt{2}-1)^t\right)^{\frac{1}{t}},\\
    f_2(t) =& \left(\frac{6+5\sqrt{2}}{28}(1+\sqrt{2})^t -\frac{4}{7}2^t - \frac{6-5\sqrt{2}}{28}(\sqrt{2}-1)^t\right)^{\frac{1}{t}},\\
    f_3(t) =& \left(\frac{6+5\sqrt{2}}{28}(1+\sqrt{2})^t\right)^{\frac{1}{t}} =  \left(\frac{6+5\sqrt{2}}{28}\right)^{\frac{1}{t}}(1+\sqrt{2}).
\end{align}

\subsection{Intermingled coding scheme}\label{sec:ExIntermingled}
Now we consider the intermingled code $(\mathcal{C}, \rho)$, where $\mathcal{C} = \lbrace 0 \rbrace \cup \lbrace 11, 23, 35, 42, 54 \rbrace$, 
\begin{equation}
\begin{gathered}
    \rho : (z_0, z_{11}, z_{23}, z_{35}, z_{42}, z_{54}) 
    \mapsto \begin{cases}
    \mathcal{C} \text{ if } (z_{11}, z_{23}, z_{35}, z_{42}, z_{54}) = (0,0,0,0,0), \\
    \lbrace 0\rbrace \cup \left\lbrace c \in \mathcal{C} \:|\: z_c \neq 0 \right\rbrace \text{ otherwise.} 
    \end{cases}\label{eq:ExampleIntermingledRho}
\end{gathered}
\end{equation}

This code is zero-error as $z$ has at most one positive component for all time step. Following the Definition \ref{def:TransitionGraph}, the corresponding transition graph is depicted in Fig. \ref{fig:TransitionGraphIntermingled}. 
\begin{figure}[h!]
    \centering
    \begin{tikzpicture}
    \node[draw=black, fill=white, shape=circle, scale=1, inner sep=2pt] (0) at (0,0) {$0$};
    \foreach \i in {2,3,4,5,6}{
     \draw[black,very thick] (60*\i-60:2) arc (\i*60+120: \i*60-60:0.5);
     \draw[->,black,very thick] (60*\i-60:2) arc (\i*60+120: \i*60+300:0.5);
     \node[draw=black, fill=white, shape=circle, scale=1, inner sep=0.5pt] (\i) at (60*\i-60:2) {$e^{\i}$};
    }
    \draw[->, black,very thick] (0) edge [loop right] (0);
    \foreach \i in {2,3,4,5,6}{
      \draw[->, white,myBG]  (0) edge [out=\i*60-45,in=\i*60+105](\i);
      \draw[->, black,very thick] (0) edge [out=\i*60-45,in=\i*60+105](\i);
      \draw[->, white,myBG] (\i) edge [out=\i*60+135,in=\i*60-75](0);
      \draw[->, black,very thick] (\i) edge [out=\i*60+135,in=\i*60-75](0);
    }
    \end{tikzpicture}
    \caption{Transition graph of the intermingled coding scheme $(\mathcal{C}, \rho)$ defined by \eqref{eq:ExampleIntermingledRho}.}
    \label{fig:TransitionGraphIntermingled}
\end{figure}
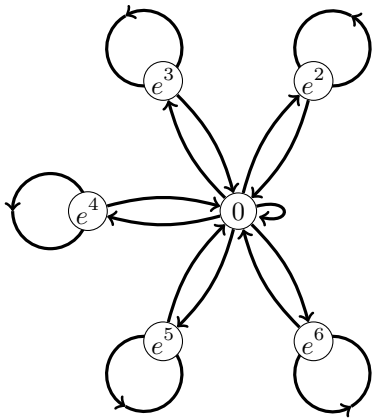

\begin{proposition}\label{prop:CapacityAchieving}
The intermingled coding scheme $(\mathcal{C}, \rho)$ defined by $\mathcal{C} = \lbrace 0 \rbrace \cup \lbrace 11, 23, 35, 42, 54 \rbrace$ and equation \eqref{eq:ExampleIntermingledRho}, achieves the  zero-error capacity $C_0 = \log(1 + \sqrt{5})$ of the channel graph $C_5 \boxplus \mathbf{1}$.
\end{proposition}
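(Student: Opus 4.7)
The approach is to apply Theorem~\ref{theo:Intermingled}: $r(\mathcal{C},\rho)$ equals $\log$ of the spectral radius of the adjacency matrix of the transition graph of $(\mathcal{C},\rho)$, so the plan is to identify that graph and to compute its spectral radius. The target $C_0(C_5\boxplus\mathbf{1})=\log(1+\sqrt{5})$ follows from Lov\'asz's $\Theta(C_5)=\sqrt{5}$ \cite{lovasz79} combined with Shannon's formula for disjoint unions admitting an adjacency-reducing mapping \cite{shannon56}, applied to $\mathbf{1}$.

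First I will verify that $(\mathcal{C},\rho)$ is zero-error. The generator set $\mathcal{C}$ itself is zero-error over $C_5\boxplus\mathbf{1}$ by a routine pairwise check against Definition~\ref{def:VariableLengthCodes}: the symbol $0$ is distinguishable from every element of $\{1,\dots,5\}$ via the disjoint-union structure, and each pair of length-$2$ words in $\{11,23,35,42,54\}$ has some coordinate at graph-distance $\geq 2$ on $C_5$. For the intermingled extension, $\rho$ is defined so that at most one of the five ``length-$2$'' coordinates of $z$ is ever nonzero; hence every generated sequence $x$ decomposes uniquely as $0$'s interleaved with a concatenation of completed length-$2$ words from $\mathcal{C}$. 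Given $x\neq x'$, either they disagree at a position where one symbol is $0$ and the other is nonzero (distinguishable by the disjoint union), or their $0$-positions coincide and their nonzero subwords differ as concatenations of words of $\mathcal{C}$, in which case the zero-error property of $\mathcal{C}$ supplies a distinguishing coordinate.

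Next I will identify the reachable part of the transition graph. From $v_0=(0,\dots,0)$, $\rho$ allows sending $0$ (self-loop at $v_0$) or starting any length-$2$ word $c$ (arc to the state $v_c$ with $z_c=1$ and all other coordinates zero); from $v_c$, $\rho$ allows only sending $0$ (self-loop at $v_c$) or completing $c$ (arc back to $v_0$). The reachable subgraph thus has $6$ vertices and adjacency matrix
\[
M_G \;=\; \begin{pmatrix} 1 & 1 & 1 & 1 & 1 & 1 \\ 1 & 1 & 0 & 0 & 0 & 0 \\ 1 & 0 & 1 & 0 & 0 & 0 \\ 1 & 0 & 0 & 1 & 0 & 0 \\ 1 & 0 & 0 & 0 & 1 & 0 \\ 1 & 0 & 0 & 0 & 0 & 1 \end{pmatrix},
\]
invariant under any permutation of the five ``half-sent'' coordinates. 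The unreachable states (two or more length-$2$ words simultaneously in progress) contribute only the eigenvalue $1$, since their transitions either strictly decrement the number of in-progress words or are self-loops, making the full transition matrix block-triangular with identity diagonal blocks on that part; the overall spectral radius therefore coincides with that of $M_G$.

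Finally, the $S_5$-symmetry diagonalizes $M_G$: on the $2$-dimensional subspace spanned by $e_{v_0}$ and $\sum_c e_{v_c}$, $M_G$ acts as $\left(\begin{smallmatrix}1 & 5 \\ 1 & 1\end{smallmatrix}\right)$, whose eigenvalues are $1\pm\sqrt{5}$; on its $4$-dimensional orthogonal complement $\{x_{v_0}=0,\ \sum_c x_{v_c}=0\}$, $M_G$ acts as the identity. Thus $\max_i|\lambda_i(M_G)|=1+\sqrt{5}$, and Theorem~\ref{theo:Intermingled} yields $r(\mathcal{C},\rho)=\log(1+\sqrt{5})=C_0$. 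The main hurdle is really the bookkeeping in producing the transition graph and the zero-error argument for intermingling; once those are in place, the $S_5$-symmetry forces the spectral computation.
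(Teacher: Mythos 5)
Your proposal is correct and follows essentially the same route as the paper: invoke Theorem~\ref{theo:Intermingled} on the six-state transition graph of Fig.~\ref{fig:TransitionGraphIntermingled}, find the spectral radius $1+\sqrt{5}$ (the paper via the characteristic polynomial $(X-1-\sqrt{5})(X-1+\sqrt{5})(X-1)^4$, you via the $S_5$-symmetry block decomposition), and match it with $C_0=\log(1+\sqrt{5})$. Your additional checks (zero-error property of the intermingling, the unreachable states of the full product state space, and the provenance of the capacity value from Lov\'asz and Shannon) are points the paper disposes of in the surrounding text rather than in the proof itself, and your treatment of them is sound.
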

For a given generator set $\mathcal{C} = \lbrace 0 \rbrace \cup \lbrace 11, 23, 35, 42, 54 \rbrace$, the intermingled coding scheme performs better than the variable-length coding since the asymptotic rate $r(\mathcal{C}) = \log\Big(\frac{1 + \sqrt{21}}{2}\Big) \simeq 1.422$ obtained in Sec. \ref{sec:ExVariableLengthCoding}, is strictly smaller than the channel capacity $C_0 =\log(1 + \sqrt{5}) \simeq 1.694$ obtained with the intermingled code $(\mathcal{C}, \rho)$. We also recall that since $(1 + \sqrt{5} )^L \notin \mathbb{N}$ for all $L$, any fixed-length code is suboptimal, as noticed in \cite{korner98}. 

\begin{proof}[Proposition \ref{prop:CapacityAchieving}]
The adjacency matrix $M_G$ of the transition graph depicted in Fig. \ref{fig:TransitionGraphIntermingled}, with the possible transmission states $(0, e^2, e^3, e^4, e^5, e^6)$, where $(e^i)$ denotes the canonical basis of $\mathbb{R}^6$, is given by
\begin{equation}
    M_G = \left(\begin{array}{cccccc}
    1&1&1&1&1&1 \\
    1&1&0&0&0&0 \\
    1&0&1&0&0&0 \\
    1&0&0&1&0&0 \\
    1&0&0&0&1&0 \\
    1&0&0&0&0&1 \\
    \end{array}\right).
\end{equation}

The characteristic polynomial of this matrix is $(X-1-\sqrt{5})(X-1+\sqrt{5})(X-1)^4$. The intermingled code has the rate equal to its maximal root, that is $1 + \sqrt{5}$, thus corresponding to the zero-error capacity $C_0 = \log(1 + \sqrt{5})$ of the channel graph $C_5 \boxplus \mathbf{1}$.
\end{proof}

 \subsection{Variable-length code for the channel graph $C_7$}

We recall that the zero-error capacity corresponding to the channel graph $C_7$, depicted in Fig. \ref{fig:graphC7}, is still unknown, see \cite{polak19}. Let us build a zero-error variable-length code on $C_7$. We consider the generator set $\widehat{\mathcal{C}} = \lbrace 0, 20, 22, 24, 40, 42, 44 \rbrace$, $\nu(\widehat{\mathcal{C}})$ is the positive solution of $X^2 = X + 6$, so $\nu(\widehat{\mathcal{C}}) = 3$. Note that it is equal to the cardinality of the one-shot maximum independent set, e.g. $\widehat{\mathcal{C}}'=\lbrace 0, 2,4 \rbrace$.

\section{Automata-based coding}\label{sec:Automata}
Our previous Theorems \ref{theo:VariableLength} and \ref{theo:Intermingled} share deep connections with the Automata Theory. In this section, we formulate the variable-length and intermingled coding schemes as particular automata. By using the concepts of \emph{regular expressions} and  \emph{generator series}, we simplify the rate analysis of the optimal coding scheme for the channel graph $C_5 \boxplus \mathbf{1}$.

\subsection{Main concepts}
The definitions and characterization proofs in this subsection can be retrieved from \cite{linz06}.

\begin{definition}[Deterministic finite automaton (DFA)]\label{def:DFA}
A deterministic finite automaton (DFA) is a tuple $A = (\mathcal{X}, \mathcal{S}, \tau, s_{\textit{start}}, \mathcal{S}_{\textit{accept}})$, which consists in 
\begin{itemize}
    \item[$\smalldiamond$] a finite alphabet $\mathcal{X}$,
    \item[$\smalldiamond$] a finite set of states $\mathcal{S}$,
    \item[$\smalldiamond$] a transition function $\tau : \mathcal{S}\times\mathcal{X} \rightarrow \mathcal{S}$,
    \item[$\smalldiamond$] an initial state $s_{\textit{start}} \in \mathcal{S}$,
    \item[$\smalldiamond$] a subset of accept states $\mathcal{S}_{\textit{accept}} \subseteq \mathcal{S}$.
\end{itemize}

We extend $\tau$ to $\mathcal{S}\times\mathcal{X}^*$ by defining
\begin{equation}
    \tau(s, x_1...x_n) \doteq \tau(\tau(...\tau(s, x_1), x_2), ...), x_n).
\end{equation}
\end{definition}

An example of automaton is depicted in Fig. \ref{fig:DFA}. The automaton starts with the initial state $s_{\textit{start}}$, receives a word $w \in \mathcal{X}^*$ as entry and at each time step $l \leq |w|$ it applies the transition function to determine its new state $s_{l + 1}$, that is $s_{l + 1} = \tau(s_l, w_l)$. The automaton accepts the word if the final state belongs to $\mathcal{S}_{\textit{accept}}$, that is $\tau(s_{\textit{start}},w) \in \mathcal{S}_{\textit{accept}}$. Otherwise the word is rejected. 

The set of words accepted by $A$ is called language recognized by $A$ and is noted $\mathcal{L}(A)$.

\begin{definition}[Transition graph of a DFA]\label{def:TransitionGraphDFA}
We define the transition graph $G_{A}\doteq \big(\mathcal{V}(G_A), \mathcal{E}(G_A) \big)$ of a DFA $A = (\mathcal{X}, \mathcal{S}, \tau, s_{\textit{start}}, \mathcal{S}_{\textit{accept}})$ \text{by}
\begin{align}
    &\mathcal{V}(G_A) = \mathcal{S}, \\
    &ss' \in \mathcal{E}(G_A) \quad \textit{ if } \quad \exists x \in \mathcal{X},\, \tau(s, x) = \lbrace s' \rbrace.
\end{align}
\end{definition}

\begin{definition}[Regular expression]\label{def:RegularExpression}
The languages of DFA's are easily described by  \emph{regular expressions}, as in \eqref{eq:ExRegularExpression}. Given a regular expression $E$, the corresponding language is denoted by $\mathcal{L}(E)$. The regular expressions are built with the letters of $\mathcal{X}$, the symbols $\epsilon$ and $\emptyset$ and the operators $+$, $\cdot$ and $^*$ where
\begin{align}
\mathcal{L}(\emptyset) = \emptyset,\quad 
\mathcal{L}(\epsilon) =\lbrace\epsilon\rbrace,\quad
\text{ and }\quad \forall x \in \mathcal{X},\;\mathcal{L}(x) =& \lbrace x \rbrace.
\end{align}
Given two regular expressions $E$ and $E'$, we have
\begin{align}
        \mathcal{L}(E^*) =& \mathcal{L}(E)^*,\\
        \mathcal{L}(E+E') =& \mathcal{L}(E) \cup \mathcal{L}(E'), \\
        \mathcal{L}(E\cdot E') =& \mathcal{L}(E)\cdot \mathcal{L}(E') = \lbrace ww' \:|\: w \in \mathcal{L}(E), \, w'\in \mathcal{L}(E')\rbrace.
\end{align}
\end{definition}

\begin{definition}[Regular language]\label{def:RegularLanguage}
A regular language is a subset $\mathcal{A} \subseteq \mathcal{X}^*$ such that there exists a DFA $A$ such that $\mathcal{A} = \mathcal{L}(A)$.
\end{definition}

There exists other equivalent characterizations for regular languages : 
\begin{itemize}
    \item[$\smallblackdiamond$] Languages recognized by nondeterministic finite automata: the deterministic automata have the same recognition power as the nondeterministic ones.
    \item[$\smallblackdiamond$] Languages generated by regular expressions.
\end{itemize}

\begin{remark}[Connection with variable-length 
and intermingled coding in Sec. \ref{sec:VariableLength} and  \ref{sec:Intermingled}]
The variable-length coding scheme presented in Sec. \ref{sec:VariableLength} corresponds to a subclass of regular languages that can be expressed by using the generator set with at most 1 imbricated star, these are called star-height one languages. Furthermore, the automaton that recognises these languages can be found in App. \ref{sec:ProofTheoremVariableLength}, with the graph $G$ that can be straightforwardly completed as a transition graph between states, by adding an initial and final set $\mathcal{F}$.

The intermingled coding scheme generates a particular class of regular languages as well, the transition function can be also derived from the transition graph of the code, with the adequate completion, as in Fig. \ref{fig:DFA}.
\end{remark}

\subsection{Example of a deterministic finite automaton}\label{sec:ExampleDFA}
We consider the deterministic finite automaton with
\begin{itemize}
    \item[$\smallblackdiamond$] alphabet $\mathcal{X} = \llbracket 0, 5\rrbracket$,
    \item[$\smallblackdiamond$] set of states $\mathcal{S} = \lbrace s_0, ..., s_5 \rbrace \cup \lbrace \textit{sink}\rbrace$,
    \item[$\smallblackdiamond$] $s_{\textit{start}} = s_0$ as the initial and only accepting state ($\mathcal{S}_{\textit{accept}} = \lbrace s_0 \rbrace$),
    \item[$\smallblackdiamond$] a transition function described on Fig. \ref{fig:DFA}. On each state $s_i$, an arrow starts towards each letter $x$ of $\mathcal{X}$, and the state $s_j$ at the extremity of this arrow corresponds to  the result of the transition function $\lbrace s_j \rbrace = \tau(s_i, x)$.
\end{itemize}

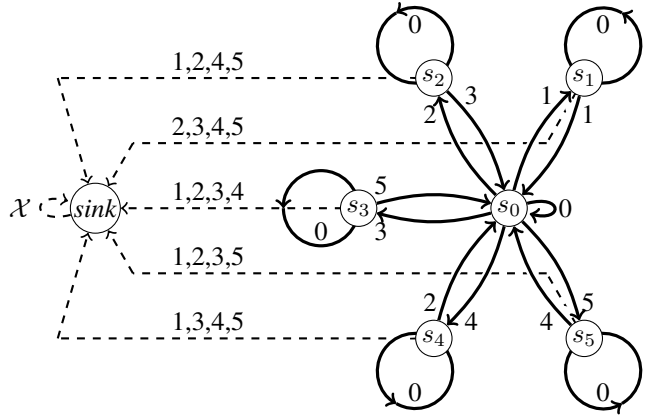
\begin{figure}[h!]
    \centering
    \begin{tikzpicture}
    \node[draw=black, fill=white, shape=circle, scale=1, inner sep=1pt] (0) at (0,0) {$s_0$};
    \node[draw=black, fill=white, shape=circle, scale=1, inner sep=0.5pt] (-1) at (-5.5,0) {\textit{sink}};
    \draw[thick, dashed, black] (60:2) -- (0.5,0.866);
    \draw[thick, dashed, black] (0.5,0.866) -- (-5,0.866);
    \draw[thick, dashed, black] (120:2) -- (-6,1.732);
    \draw[thick, dashed, black] (-60:2) -- (0.5,-0.866);
    \draw[thick, dashed, black] (0.5,-0.866) -- (-5,-0.866);
    \draw[thick, dashed, black] (-120:2) -- (-6,-1.732);
    \draw[thick, dashed, black, ->] (-2,0) -- (-1);
    \draw[thick, dashed, black, ->] (-5,-0.866) -- (-1);
    \draw[thick, dashed, black, ->] (-6,-1.732) -- (-1);
    \draw[thick, dashed, black, ->] (-5,0.866) -- (-1);
    \draw[thick, dashed, black, ->] (-6,1.732) -- (-1);
    \draw[->, dashed, black, thick] (-1) edge [loop left] (-1);
    \foreach \i in {1,2,3,4,5}{
     \draw[white,myBG] (60*\i:2) arc (\i*60+180:\i*60:0.5);
     \draw[->,white,myBG] (60*\i:2) arc (\i*60-180:\i*60:0.5);
     \draw[black,very thick] (60*\i:2) arc (\i*60+180:\i*60:0.5);
     \draw[->,black,very thick] (60*\i:2) arc (\i*60-180:\i*60:0.5);
     \node[draw=black, fill=white, shape=circle, scale=1, inner sep=1pt] (\i) at (60*\i:2) {$s_{\i}$};
    }
    \draw[->, black,very thick] (0) edge [loop right] (0);
    \foreach \i in {1,2,3,4,5}{
      \draw[->, white,myBG]  (0) edge [out=\i*60+15,in=\i*60+165](\i);
      \draw[->, black,very thick] (0) edge [out=\i*60+15,in=\i*60+165](\i);
      \draw[->, white,myBG] (\i) edge [out=\i*60+195,in=\i*60-15](0);
      \draw[->, black,very thick] (\i) edge [out=\i*60+195,in=\i*60-15](0);
    }
    \node (a0) at (0.75,0) {0};
    \node (a1) at (1.25,-2.45) {0};
    \node (a2) at (1.25,2.45) {0};
    \node (a3) at (-1.25,-2.45) {0};
    \node (a4) at (-1.25,2.45) {0};
    \node (b0) at (-4, 1.066) {2,3,4,5};
    \node (b1) at (-4, 1.932) {1,2,4,5};
    \node (b2) at (-4, 0.2) {1,2,3,4};
    \node (b3) at (-4, -0.666) {1,2,3,5};
    \node (b4) at (-4, -1.532) {1,3,4,5};
    \node (c) at (-6.5, 0) {$\mathcal{X}$};
    \node (d0) at (1.05, 1.25) {1};
    \node (d1) at (-1.05, 1.25) {2};
    \node (d2) at (1.05, -1.25) {5};
    \node (d3) at (-1.05, -1.25) {2};
    \node (e0) at (0.5, 1.5) {1};
    \node (e1) at (-0.5, 1.5) {3};
    \node (e2) at (0.5, -1.5) {4};
    \node (e3) at (-0.5, -1.5) {4};
    \node (f1) at (-1.7, 0.3) {5};
    \node (f2) at (-1.7, -0.3) {3};
    \node (f3) at (-2.5, -0.3) {0};
    \end{tikzpicture}
    \caption{Discrete finite automaton (DFA) corresponding to the intermingled coding scheme of Sec. \ref{sec:ExIntermingled}.}
    \label{fig:DFA}
\end{figure}

This automaton recognizes the language $\mathcal{L}(E)$ corresponding to the following regular expression
\begin{equation}
 E =    \big(0+1(0)^*1+2(0)^*3+3(0)^*5+4(0)^*2+5(0)^*4\big)^*.\label{eq:ExRegularExpression}
\end{equation}

This language is exactly the union over $L \in \mathbb{N}$ of the codewords of length $L$ generated by the intermingled coding pattern over $C_5\boxplus\mathbf{1}$, as in Sec. \ref{sec:ExIntermingled}.

\subsection{Rational coding scheme}\label{sec:ContributionAutomata}

\begin{definition}[Rational codes]\label{def:RationalCode}
Let $E$ be a regular expression such that $\mathcal{L}(E)$ is infinite and $E = (E')^*$, where $E'$ is a regular expression over the alphabet $\mathcal{X}$. For all $L \in \mathbb{N}$, the codebook of the rational code is defined by $\mathcal{L}(E)_{[L]}$, that is the set of words of length $L$ accepted by the automaton corresponding to the regular expression $E$. 
\end{definition}

This code is zero-error if for all $L \in \N$, the codewords from $\mathcal{L}(E)_{[L]}$ are distinguishable, i.e. $\forall x, x' \in \mathcal{L}(E)_{[L]},\, xx' \notin \mathcal{E}(G_W^{\boxtimes L})$.

\begin{definition}[Rational code rate]\label{def:RationalCodeRate}
The asymptotic rate $r(E)$ of a rational code $E$ is defined by
\begin{equation*}
    r(E) \doteq \lim\limits_{L \rightarrow \infty} \frac{1}{L} \log\big(\#\mathcal{L}(E)_{[L]}\big) ,
\end{equation*}
and its average number of transmitted symbols per channel use $\nu(E) \doteq 2^{r(E)}$.

Note that the existence of the limit is given by Fekete's lemma, as for all $L,L' \in \mathbb{N}$ we have 
\begin{equation}
    \log \#\mathcal{L}(E)_{[L+L']} \geq \log \#\mathcal{L}(E)_{[L]} + \log \#\mathcal{L}(E)_{[L']} .
\end{equation}

Let $E'$ be the regular expression such that $E = (E')^*$. If $\gcd(|w|,\, w \in \mathcal{L}(E')) \neq 1$, then Fekete's lemma cannot be applied directly and we redefine the rate by 
\begin{equation}
    r(E) \doteq \lim_{L \rightarrow \infty} \frac{1}{dL} \log\big(\#\mathcal{L}(E)_{[dL]}\big),\label{eq:ratePGCDrationalcode}
\end{equation}
where $d = \gcd\big(|w|,\, w \in \mathcal{L}(E')\big)$ and $\nu(\mathcal{C}) \doteq 2^{r(\mathcal{C})}$, and we take again $\nu(E) \doteq 2^{r(E)}$ with this new definition.
\end{definition}

The asymptotic rate of the rational code can be computed via the generator series of its associated regular expression.

\begin{definition}[Generator series]\label{def:GeneratorSeries}
Given a regular expression $E$, we define its generator series by
\begin{equation}
    \begin{gathered}
        F_{E} : z \mapsto \sum_{l \in \mathbb{N}} \#\mathcal{L}(E)_{[l]}z^l.
    \end{gathered}
\end{equation}
The generator series $F_{E}$ of a regular expression $E$ is always a rational fraction, as the terms of the series follow a linear difference equation.
\end{definition}

\begin{proposition}[Recursive computation of the generator series]\label{prop:RecursiveComputation}\label{prop:RecursiveComputation}
Let us denote $E^l \doteq E \cdot ... \cdot E$ $l$ times. If the regular expressions $E$ and $E'$ satisfy $\mathcal{L}(E) \cap \mathcal{L}(E') = \emptyset$, then we have
\begin{align}
F_{E+E'}(z) = & F_{E}(z) + F_{E'}(z), \label{eq:PropRecursive1}\\
F_{EE'}(z) = & F_{E}(z)F_{E'}(z).\label{eq:PropRecursive2}
\end{align}
If the sets $\big(\mathcal{L}(E^l)\big)_{l \in \mathbb{N}}$ are disjoint, then we have
\begin{align}
F_{E^*}(z) = \frac{1}{1 - F_{E}(z)}.\label{eq:PropRecursive3}
\end{align}
\end{proposition}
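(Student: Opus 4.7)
The plan is to translate each set-theoretic operation on regular expressions into an arithmetic operation on the coefficient sequences $(\#\mathcal{L}(\cdot)_{[l]})_{l \in \N}$, and then recognize each of these as the corresponding operation on formal power series. So each of the three identities reduces to a combinatorial counting identity at the level of coefficients.

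For \eqref{eq:PropRecursive1}, the hypothesis $\mathcal{L}(E) \cap \mathcal{L}(E') = \emptyset$ gives a disjoint union $\mathcal{L}(E+E') = \mathcal{L}(E) \sqcup \mathcal{L}(E')$, so intersecting with the length-$l$ slice yields $\#\mathcal{L}(E+E')_{[l]} = \#\mathcal{L}(E)_{[l]} + \#\mathcal{L}(E')_{[l]}$ for every $l \in \N$. Summing $z^l$ times these identities gives \eqref{eq:PropRecursive1} directly.

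For \eqref{eq:PropRecursive2}, I partition $\mathcal{L}(EE')_{[l]}$ by the length $k$ of the first factor: every $w \in \mathcal{L}(EE')_{[l]}$ writes as $w = uv$ with $u \in \mathcal{L}(E)_{[k]}$ and $v \in \mathcal{L}(E')_{[l-k]}$ for some $k \in \llbracket 0, l \rrbracket$. Assuming this factorization is unique — which is the substantive assumption, and the main obstacle, since set-theoretic disjointness of $\mathcal{L}(E)$ and $\mathcal{L}(E')$ alone does not force unambiguous concatenation (one must rule out pairs $uv = u'v'$ with $|u| \neq |u'|$, both factors on each side being admissible) — one obtains the Cauchy convolution $\#\mathcal{L}(EE')_{[l]} = \sum_{k=0}^{l} \#\mathcal{L}(E)_{[k]} \cdot \#\mathcal{L}(E')_{[l-k]}$, which is exactly the coefficient of $z^l$ in the product $F_E(z) F_{E'}(z)$. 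I would therefore either strengthen the hypothesis to unambiguous concatenation or point out that this is what the paper's disjointness condition is meant to ensure in context.

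For \eqref{eq:PropRecursive3}, the disjointness of the family $(\mathcal{L}(E^l))_{l \in \N}$ yields $\mathcal{L}(E^*) = \bigsqcup_{l \in \N} \mathcal{L}(E^l)$; in particular $\mathcal{L}(E^0) = \{\epsilon\}$ is disjoint from $\mathcal{L}(E)$, so $\epsilon \notin \mathcal{L}(E)$ and hence $F_E(0) = 0$, making $\sum_{l \geq 0} F_E(z)^l$ a well-defined formal power series. Iterating \eqref{eq:PropRecursive2} (under the unambiguous-concatenation interpretation already used for the pairwise case) gives $F_{E^l}(z) = F_E(z)^l$ for every $l \in \N$, and applying \eqref{eq:PropRecursive1} to the disjoint union yields
\begin{equation}
F_{E^*}(z) = \sum_{l \in \N} F_{E^l}(z) = \sum_{l \in \N} F_E(z)^l = \frac{1}{1 - F_E(z)},
\end{equation}
which closes the argument. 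The only non-routine points are the unique-factorization subtleties for the product, which I would flag explicitly rather than try to hide inside the disjointness hypothesis.
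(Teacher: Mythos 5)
Your proof follows essentially the same route as the paper's: coefficient-by-coefficient counting, with the disjointness giving additivity for $+$, the Cauchy convolution giving the product, and the geometric series (justified by $F_E(0)=0$, which you correctly deduce from $\epsilon \notin \mathcal{L}(E)$) giving the star. Your caveat about unambiguous concatenation is well taken rather than a defect: the paper's own proof silently assumes the concatenation map is injective when it rewrites $\#\big(\mathcal{L}(E)\cdot\mathcal{L}(E')\big)_{[l]}$ as a convolution, so making that hypothesis explicit strengthens, not weakens, the argument.
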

The proof of Proposition \ref{prop:RecursiveComputation} is stated in App. \ref{sec:ProofPropGeneratorSeries}. For regular languages, the sequence of the number of words of length $L \in \mathbb{N}$ satisfies a linear difference equation. By using the same arguments as in the proof of Theorem \ref{theo:Intermingled} in App. \ref{sec:ProofTheoremIntermingled}, one can determine a closed-form expression for this sequence and easily derive its asymptotic rate.

\begin{theorem}[Rate computation of rational codes]\label{theo:RationalCodes}
Let $E$ be a rational code, and let $A$ be the automaton that recognize the language $\mathcal{L}(E)$. Then $\nu(E)$ equals the spectral radius of the adjacency matrix of the transition graph of $A$. 

Furthermore $\nu(E)$ is also the inverse of the convergence radius of the generator series $F_E$, or equivalently $\nu(E) = \frac{1}{|p|}$ where $p$ is the pole of the generator series $F_E$ with the smallest modulus.
\end{theorem}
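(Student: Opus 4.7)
The plan is to apply the transfer matrix method from analytic combinatorics, following \cite{flajolet01}. Let $A = (\mathcal{X}, \mathcal{S}, \tau, s_{\textit{start}}, \mathcal{S}_{\textit{accept}})$ denote the DFA recognizing $\mathcal{L}(E)$, and let $M$ be the $\#\mathcal{S} \times \#\mathcal{S}$ nonnegative integer matrix with entries $M_{s,s'} = \#\{x \in \mathcal{X} : \tau(s, x) = s'\}$. This matrix refines the $\{0,1\}$-valued adjacency matrix of Definition \ref{def:TransitionGraphDFA}, and after reducing $A$ to the natural form where each transition carries a single letter and only reachable and coreachable states are kept, the two matrices coincide and share the same spectral radius. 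The key identity is that length-$L$ accepted words are in bijection with length-$L$ walks in $A$ from $s_{\textit{start}}$ to $\mathcal{S}_{\textit{accept}}$, yielding
\begin{equation*}
\#\mathcal{L}(E)_{[L]} \;=\; e_{s_{\textit{start}}}^\top M^L \mathds{1}_{\mathcal{S}_{\textit{accept}}}.
\end{equation*}

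From here, the first half of the theorem mirrors the proof of Theorem \ref{theo:Intermingled}: applying the Jordan decomposition $M = PJP^{-1}$ and expanding $J^L$ block by block yields $\#\mathcal{L}(E)_{[L]} = \sum_i Q_i(L)\,\lambda_i^L$, where the $\lambda_i$ are the distinct eigenvalues of $M$ and each $Q_i$ is a polynomial in $L$. Since $|\lambda_i| \leq \max_j |\lambda_j(M)|$, taking $L$-th roots immediately gives $\nu(E) \leq \max_j |\lambda_j(M)|$; the reverse inequality will follow as soon as some eigenvalue of maximal modulus contributes with nonzero coefficient, which uses that $\mathcal{L}(E)$ is infinite because $E = (E')^*$ with $\mathcal{L}(E')$ nonempty.

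For the second half, summing the identity above as a power series in $z$ produces
\begin{equation*}
F_E(z) \;=\; e_{s_{\textit{start}}}^\top (I - zM)^{-1} \mathds{1}_{\mathcal{S}_{\textit{accept}}},
\end{equation*}
which is manifestly a rational function of $z$ (consistent with Proposition \ref{prop:RecursiveComputation}) whose poles lie in the set $\{1/\lambda : \lambda \in \mathrm{Sp}(M),\, \lambda \neq 0\}$. The pole $p$ of smallest modulus therefore satisfies $|p| = 1/\max_j |\lambda_j(M)| = 1/\nu(E)$. Equivalently, the Cauchy--Hadamard formula applied to $F_E$ gives that the reciprocal of its convergence radius equals $\limsup_L \#\mathcal{L}(E)_{[L]}^{1/L}$, which is precisely $\nu(E)$ by Definition \ref{def:RationalCodeRate}.

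The main obstacle is verifying that the spectral radius is genuinely realized by an eigenvalue with nonzero coefficient in the walk-count expansion, and managing the periodic case where several eigenvalues share the maximal modulus. Both issues are handled by a Perron--Frobenius argument on the restriction of $M$ to the strongly connected component of states that are simultaneously reachable from $s_{\textit{start}}$ and coreachable to $\mathcal{S}_{\textit{accept}}$: this component is nonempty because $\mathcal{L}(E)$ is infinite, and its period equals $d = \gcd\bigl(|w|, \, w \in \mathcal{L}(E')\bigr)$. Passing to $M^d$ then produces a primitive irreducible block whose unique dominant eigenvalue is $\bigl(\max_j |\lambda_j(M)|\bigr)^d$, which matches exactly the modified rate definition \eqref{eq:ratePGCDrationalcode} in the case $d > 1$.
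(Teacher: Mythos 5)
Your proposal is correct in substance and rests on the same core identity as the paper --- counting accepted words of length $L$ as length-$L$ walks from $s_{\textit{start}}$ to $\mathcal{S}_{\textit{accept}}$, i.e. $\#\mathcal{L}(E)_{[L]} = e_{s_{\textit{start}}}^\top M^L \mathds{1}_{\mathcal{S}_{\textit{accept}}}$ --- but the asymptotic extraction is done by a genuinely different route. The paper writes $\#\mathcal{L}(E)_{[L]} = \bigl\langle M_A^L, I_{s_{\textit{start}},\mathcal{S}_{\textit{accept}}}\bigr\rangle$, invokes Gelfand's formula $\sqrt[L]{\|M_A^L\|} \to \max_s |\lambda_s(M_A)|$, and disposes of the normalized inner product by a positivity/boundedness argument (``does not converge to zero because of the positive rate''), which asserts rather than proves the lower bound; the pole/convergence-radius half is not spelled out in the appendix at all. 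You instead get the upper bound from the Jordan expansion $\#\mathcal{L}(E)_{[L]} = \sum_i Q_i(L)\lambda_i^L$ and the lower bound from Perron--Frobenius on the trimmed (reachable and coreachable) part of the automaton, and you prove the second half explicitly via $F_E(z) = e_{s_{\textit{start}}}^\top (I-zM)^{-1}\mathds{1}_{\mathcal{S}_{\textit{accept}}}$ together with Cauchy--Hadamard. Your version buys two things the paper glosses over: it uses the letter-multiplicity matrix rather than the $\{0,1\}$ adjacency matrix (which is what actually makes the walk-count identity true when several letters join the same pair of states), and it discards non-coreachable states such as the sink, without which the spectral radius of the full multiplicity matrix can strictly exceed $\nu(E)$ (in the paper's own Fig.~\ref{fig:DFA} the sink's self-loop carries all of $\mathcal{X}$). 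Two small imprecisions to fix, both at the level of rigor the paper itself tolerates: the useful states need not form a single strongly connected component, so the Perron--Frobenius step should be phrased as routing walks through \emph{some} SCC of maximal spectral radius inside the trimmed automaton; and the claim that after reduction the multiplicity matrix ``coincides'' with the adjacency matrix of Definition~\ref{def:TransitionGraphDFA} is not literally achievable for every DFA --- the honest statement is that the theorem should be read with edge multiplicities, while the period-$d$ assertion for the dominant component deserves a one-line justification via cycles through the accepting start state.
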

Although this theorem is a well-known result in Automata Theory, see \cite[Proposition 8.1, pp. 20]{flajolet01}, we provide the proof in App. \ref{sec:ProofTheoRationalCode}.

\begin{theorem}[Channel generator series]\label{theo:MaxStableGeneratorSeries}
We define the channel generator series by 
\begin{align}
    \sum_{l \in \mathbb{N}} \alpha(G_W^{\boxtimes l}) z^l.
\end{align}
The zero-error capacity of the channel $C_0$ is the inverse of the convergence radius of the channel generator series.
\end{theorem}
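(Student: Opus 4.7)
The plan is to combine the Cauchy--Hadamard formula for the convergence radius of a power series with Fekete's lemma applied to the superadditive sequence $\log \alpha(G_W^{\boxtimes l})$.

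First, I would invoke the Cauchy--Hadamard formula: for any power series $\sum_l a_l z^l$ with nonnegative coefficients, the convergence radius $R$ satisfies
\begin{equation}
    \frac{1}{R} = \limsup_{l \to \infty} \sqrt[l]{a_l}.
\end{equation}
Applied to $a_l = \alpha(G_W^{\boxtimes l})$, this gives $\frac{1}{R} = \limsup_{l \to \infty} \sqrt[l]{\alpha(G_W^{\boxtimes l})}$.

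Next, I would use Proposition \ref{prop:StableMax} to identify $\alpha(G_W^{\boxtimes l}) = N(W, l)$, and then recall from Definition \ref{def:ZeroErrorCapacity} that the sequence $\log N(W, \cdot)$ is superadditive: concatenating a zero-error codebook for $l$ channel uses with one for $l'$ channel uses produces a zero-error codebook for $l + l'$ channel uses, so $N(W, l + l') \geq N(W, l) \cdot N(W, l')$. Applying Fekete's lemma (Lemma \ref{lemma:Fekete}) then guarantees that the $\limsup$ is in fact a genuine limit, equal to $\sup_l \frac{1}{l}\log N(W, l) = C_0(W)$.

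Combining these two observations yields
\begin{equation}
    \log \frac{1}{R} \;=\; \lim_{l \to \infty} \frac{1}{l} \log \alpha(G_W^{\boxtimes l}) \;=\; C_0(W),
\end{equation}
which is the claimed identity. There is essentially no obstacle here: the statement is a direct corollary of Cauchy--Hadamard once the graph-theoretic reformulation $N(W,l) = \alpha(G_W^{\boxtimes l})$ and the superadditivity needed to upgrade $\limsup$ to $\lim$ are in place. The only mild subtlety is that $\alpha(G_W^{\boxtimes l})$ is always a positive integer, so no degenerate case (such as identically zero coefficients) arises, and the formula $\nu = 1/R$ is meaningful.
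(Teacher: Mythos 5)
Your proposal is correct and follows essentially the same route as the paper: the paper's proof is a one-line invocation of the Cauchy--Hadamard theorem, with the identification $N(W,l)=\alpha(G_W^{\boxtimes l})$ (Proposition \ref{prop:StableMax}) and the Fekete/limsup bookkeeping already built into Definition \ref{def:ZeroErrorCapacity}. Your explicit mention of superadditivity and Fekete's lemma just spells out what the paper leaves implicit, so there is nothing to add.
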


\begin{proof}[Theorem \ref{theo:MaxStableGeneratorSeries}]
This is a direct consequence of the Cauchy-Hadamard theorem for power series, see \cite[Theorem 2.6, pp. 55]{lang1999complex}, which states that the convergence radius of a power series $\sum_{l \in \mathbb{N}} a_l z^l$ is equal to $\frac{1}{\limsup_l \sqrt[n]{a_l}}$.

\end{proof}

\subsection{Computation of the asymptotic rate based on the generator series}\label{sec:ExampleComputationOptimalRate}

The Proposition \ref{prop:CapacityAchieving} shows that the intermingled code $(\mathcal{C},\rho)$ defined by $\mathcal{C} = \lbrace 0 \rbrace \cup \lbrace 11, 23, 35, 42, 54 \rbrace$ and equation \eqref{eq:ExampleIntermingledRho}, is optimal for the optimal channel graph $C_5 \boxplus \mathbf{1}$. The corresponding regular expression writes
\begin{equation}
    E \doteq \big(0+1(0)^*1+2(0)^*3+3(0)^*5+4(0)^*2+5(0)^*4\big)^*.
\end{equation}
By using the properties \eqref{eq:PropRecursive1}-\eqref{eq:PropRecursive3} of the generator series $F_E$, we compute the asymptotic rate of the automaton defined by the regular expression $E$.
\begin{align}
F_E(z) =& \frac{1}{1 - F_{0+1(0)^*1+2(0)^*3+3(0)^*5+4(0)^*2+5(0)^*4}(z)}\label{eq:RecurseiveComputation1} \\ 
=& \frac{1}{1 - F_0(z) - F_{1(0)^*1}(z) - F_{2(0)^*3}(z) - F_{3(0)^*5}(z) - F_{4(0)^*2}(z) - F_{5(0)^*4}(z)} \label{eq:RecurseiveComputation2}\\
=& \frac{1}{1 - z - \frac{5z^2}{1-z}} \label{eq:RecurseiveComputation3}\\
=& \frac{z - 1}{4z^2 + 2z - 1},\label{eq:RecurseiveComputation4}
\end{align}
where equation \eqref{eq:RecurseiveComputation3} comes from \eqref{eq:PropRecursive2} and 
\begin{align}
F_{1(0)^*1}(z) =&F_{1}(z)F_{0^*}(z)F_{1}(z) = z \cdot F_{0^*}(z) \cdot z\\
=& z^2 \sum_{l\in \mathbb{N}} z^l = \frac{z^2}{1-z}.
\end{align}
as it directly follows from the definition that $F_i(z) = z$ for all $i \in \lbrace 0, ..., 5\rbrace$.

This rational fraction \eqref{eq:RecurseiveComputation4} has two poles  $\frac{-1 \pm \sqrt{5}}{4}$. The pole with smallest modulus is $\frac{-1 + \sqrt{5}}{4}$, which has $\sqrt{5} + 1$ as inverse. We retrieved the optimal rate $\log(\sqrt{5} + 1)$ of Proposition \ref{prop:CapacityAchieving} with much simpler computations.

\section*{Acknowledgment}
The authors thank Iryna Andriyanova for her insightful comment on the Theory of Automata, and Claudio Weidmann for pointing Shannon's results in \cite[Part I]{shannon48}.


\begin{thebibliography}{71}

\bibitem{shannon56}
C.~Shannon, ``The zero error capacity of a noisy channel,'' \emph{IRE
  Transactions on Information Theory}, vol.~2, no.~3, pp. 8--19, 1956.

\bibitem{berge73}
C.~Berge, \emph{Graphs and Hypergraphs}, ser. North-Holland mathematical
  library.\hskip 1em plus 0.5em minus 0.4em\relax Amsterdam, 1973.

\bibitem{grotschel1984polynomial}
M.~Gr{\"o}tschel, L.~Lov{\'a}sz, and A.~Schrijver, ``Polynomial algorithms for
  perfect graphs,'' \emph{Ann. Discrete Math}, vol.~21, pp. 325--356, 1984.

\bibitem{klavzar2011handbook}
S.~Klavzar, R.~Hammack, and W.~Imrich, ``Handbook of graph products,'' 2011.

\bibitem{Kovacevic_TCOM19}
M.~{Kova{\v c}evi{\'c}}, ``Zero-error capacity of duplication channels,''
  \emph{IEEE Transactions on Communications}, vol.~67, no.~10, pp. 6735--6742,
  Oct 2019.

\bibitem{KovacevicPopovski_IT14}
M.~{Kova{\v c}evi{\'c}} and P.~{Popovski}, ``Zero-error capacity of a class of
  timing channels,'' \emph{IEEE Transactions on Information Theory}, vol.~60,
  no.~11, pp. 6796--6800, Nov 2014.

\bibitem{KovacevicStojakovicTan_IT17}
M.~{Kova{\v c}evi{\'c}}, M.~{Stojakovi{\'c}}, and V.~Y.~F. {Tan}, ``Zero-error
  capacity of $ p $ -ary shift channels and fifo queues,'' \emph{IEEE
  Transactions on Information Theory}, vol.~63, no.~12, pp. 7698--7707, Dec
  2017.

\bibitem{csiszar2011information}
I.~Csiszar and J.~K{\"o}rner, \emph{Information theory: coding theorems for
  discrete memoryless systems}.\hskip 1em plus 0.5em minus 0.4em\relax
  Cambridge University Press, 2011.

\bibitem{polyanskiy2010channel}
Y.~{Polyanskiy}, H.~V. {Poor}, and S.~{Verd{\' u}}, ``Channel coding rate in
  the finite blocklength regime,'' \emph{IEEE Transactions on Information
  Theory}, vol.~56, no.~5, pp. 2307--2359, May 2010.

\bibitem{Dalai_ISIT16}
M.~{Dalai} and Y.~{Polyanskiy}, ``Bounds on the reliability of a typewriter
  channel,'' in \emph{2016 IEEE International Symposium on Information Theory
  (ISIT)}, July 2016, pp. 1715--1719.

\bibitem{DalaiPolyanskiy_IT18}
------, ``Bounds on the reliability function of typewriter channels,''
  \emph{IEEE Transactions on Information Theory}, vol.~64, no.~9, pp.
  6208--6222, Sep. 2018.

\bibitem{CohenFachiniKorner_IT16}
G.~{Cohen}, E.~{Fachini}, and J.~{K\"orner}, ``Zero-error capacity of binary
  channels with memory,'' \emph{IEEE Transactions on Information Theory},
  vol.~62, no.~1, pp. 3--7, Jan 2016.

\bibitem{AhlswedeCaiZhang_IT98}
R.~{Ahlswede}, N.~{Cai}, and Z.~{Zhang}, ``Zero-error capacity for models with
  memory and the enlightened dictator channel,'' \emph{IEEE Transactions on
  Information Theory}, vol.~44, no.~3, pp. 1250--1252, May 1998.

\bibitem{Dalai_ISIT14}
M.~{Dalai}, ``An elias bound on the bhattacharyya distance of codes for
  channels with a zero-error capacity,'' in \emph{2014 IEEE International
  Symposium on Information Theory}, June 2014, pp. 1276--1280.

\bibitem{Dalai_IT15}
------, ``Elias bound for general distances and stable sets in edge-weighted
  graphs,'' \emph{IEEE Transactions on Information Theory}, vol.~61, no.~5, pp.
  2335--2350, May 2015.

\bibitem{BoseElariefTallini_IT18}
B.~{Bose}, N.~{Elarief}, and L.~G. {Tallini}, ``On codes achieving zero error
  capacities in limited magnitude error channels,'' \emph{IEEE Transactions on
  Information Theory}, vol.~64, no.~1, pp. 257--273, Jan 2018.

\bibitem{abreu18}
A.~Abreu, L.~Cunha, T.~Fernandes, C.~de~Figueiredo, L.~Kowada, F.~Marquezino,
  D.~Posner, and R.~Portugal, ``The graph tessellation cover number: extremal
  bounds, efficient algorithms and hardness,'' in \emph{Latin American
  Symposium on Theoretical Informatics}.\hskip 1em plus 0.5em minus 0.4em\relax
  Springer, 2018, pp. 1--13.

\bibitem{lovasz79}
L.~Lov{\'a}sz, ``On the shannon capacity of a graph,'' \emph{IEEE Transactions
  on Information Theory}, vol.~25, no.~1, pp. 1--7, 1979.

\bibitem{rosenfeld67}
M.~Rosenfeld, ``On a problem of ce shannon in graph theory,'' \emph{Proceedings
  of the American Mathematical Society}, vol.~18, no.~2, pp. 315--319, 1967.

\bibitem{hales73}
R.~Hales, ``Numerical invariants and the strong product of graphs,''
  \emph{Journal of Combinatorial Theory, Series B}, vol.~15, no.~2, pp.
  146--155, 1973.

\bibitem{haemers78}
W.~Haemers \emph{et~al.}, ``An upper bound for the shannon capacity of a
  graph,'' in \emph{Colloq. Math. Soc. J{\'a}nos Bolyai}, vol.~25, 1978, pp.
  267--272.

\bibitem{alon98}
N.~Alon, ``The shannon capacity of a union,'' \emph{Combinatorica}, vol.~18,
  no.~3, pp. 301--310, 1998.

\bibitem{jha98}
P.~K. Jha and S.~Klavzar, ``Independence in direct-product graphs,'' \emph{Ars
  Combinatoria}, vol.~50, pp. 53--64, 1998.

\bibitem{hahn95}
G.~Hahn, P.~Hell, and S.~Poljak, ``On the ultimate independence ratio of a
  graph,'' \emph{European Journal of Combinatorics}, vol.~16, no.~3, pp.
  253--261, 1995.

\bibitem{korner98}
J.~K\"orner and A.~Orlitsky, ``Zero-error information theory,'' \emph{IEEE
  Transactions on Information Theory}, vol.~44, no.~6, pp. 2207--2229, 1998.

\bibitem{polak19}
S.~C. Polak and A.~Schrijver, ``New lower bound on the shannon capacity of
  {$C_7$} from circular graphs,'' \emph{Information Processing Letters}, vol.
  143, pp. 37--40, 2019.

\bibitem{gallai62}
T.~Gallai, ``Graphen mit triangulierbaren ungeraden vielecken,'' \emph{Magyar
  Tud. Akad. Mat. Kutat{\'o} Int. K{\"o}zl}, vol.~7, pp. 3--36, 1962.

\bibitem{sonnemann74}
E.~Sonnemann and O.~Krafft, ``Independence numbers of product graphs,''
  \emph{Journal of Combinatorial Theory, Series B}, vol.~17, no.~2, pp.
  133--142, 1974.

\bibitem{vesel98}
A.~Vesel, ``The independence number of the strong product of cycles,''
  \emph{Computers \& Mathematics with Applications}, vol.~36, no.~7, pp. 9--21,
  1998.

\bibitem{bohman03}
T.~Bohman and R.~Holzman, ``A nontrivial lower bound on the shannon capacities
  of the complements of odd cycles,'' \emph{IEEE Transactions on Information
  Theory}, vol.~49, no.~3, pp. 721--722, 2003.

\bibitem{vesel02}
A.~Vesel and J.~{\v{Z}}erovnik, ``Improved lower bound on the shannon capacity
  of {$C_7$},'' \emph{Information Processing Letters}, vol.~81, no.~5, pp.
  277--282, 2002.

\bibitem{codenotti03}
B.~Codenotti, I.~Gerace, and G.~Resta, ``Some remarks on the shannon capacity
  of odd cycles,'' \emph{Ars Combinatoria}, vol.~66, pp. 243--258, 2003.

\bibitem{baumert71}
L.~Baumert, R.~McEliece, E.~Rodemich, H.~Rumsey, R.~Stanley, and H.~Taylor, ``A
  combinatorial packing problem,'' \emph{Computers in Algebra and Number
  Theory}, vol.~4, 1971.

\bibitem{badalyan13}
S.~H. Badalyan and S.~E. Markosyan, ``On the independence number of the strong
  product of cycle-powers,'' \emph{Discrete Mathematics}, vol. 313, no.~1, pp.
  105--110, 2013.

\bibitem{bohman03a}
T.~Bohman, ``A limit theorem for the shannon capacities of odd cycles. {I},''
  \emph{Proceedings of the American Mathematical Society}, vol. 131, no.~11,
  pp. 3559--3569, 2003.

\bibitem{bohman05}
------, ``A limit theorem for the shannon capacities of odd cycles. {II},''
  \emph{Proceedings of the American Mathematical Society}, vol. 133, no.~2, pp.
  537--543, 2005.

\bibitem{mathew17}
K.~A. Mathew and P.~R. {\"O}sterg{\aa}rd, ``New lower bounds for the shannon
  capacity of odd cycles,'' \emph{Designs, Codes and Cryptography}, vol.~84,
  no. 1-2, pp. 13--22, 2017.

\bibitem{Berge1961}
C.~Berge, ``Farbung von graphen, deren samtliche bzw. deren ungerade kreise
  starr sind,'' \emph{Wissenschaftliche Zeitschrift}, 1961.

\bibitem{Chudnovsky03}
M.~Chudnovsky, N.~Robertson, P.~D. Seymour, and R.~Thomas, ``Progress on
  perfect graphs,'' \emph{Mathematical Programming}, vol.~97, no. 1-2, pp.
  405--422, 2003.

\bibitem{shannon48}
C.~E. Shannon, ``A mathematical theory of communication,'' \emph{Bell system
  technical journal}, vol.~27, no.~3, pp. 379--423, 1948.

\bibitem{weidmann08}
S.~Ben-Jamaa, C.~Weidmann, and M.~Kieffer, ``Analytical tools for optimizing
  the error correction performance of arithmetic codes,'' \emph{IEEE
  Transactions on Communications}, vol.~56, no.~9, pp. 1458--1468, 2008.

\bibitem{flajolet01}
P.~Flajolet and R.~Sedgewick, ``Analytic combinatorics: functional equations,
  rational and algebraic functions,'' \emph{report available [on-line] at
  https://hal.inria.fr/inria-00072528}, 2001.

\bibitem{Devroye_Allerton18}
M.~{Asadi} and N.~{Devroye}, ``On the zero-error capacity of channels with rate
  limited noiseless feedback,'' in \emph{2018 56th Annual Allerton Conference
  on Communication, Control, and Computing (Allerton)}, Oct 2018, pp.
  1141--1146.

\bibitem{guo90}
F.~{Guo} and Y.~{Watanabe}, ``On graphs in which the shannon capacity is
  unachievable by finite product,'' \emph{IEEE Transactions on Information
  Theory}, vol.~36, no.~3, pp. 622--623, May 1990.

\bibitem{Devroye_IT17}
Y.~{Chen} and N.~{Devroye}, ``Zero-error relaying for primitive relay
  channels,'' \emph{IEEE Transactions on Information Theory}, vol.~63, no.~12,
  pp. 7708--7715, Dec 2017.

\bibitem{Devroye_Allerton14}
Y.~{Chen}, S.~{Shahi}, and N.~{Devroye}, ``Colour-and-forward: Relaying ``what
  the destination needs'' in the zero-error primitive relay channel,'' in
  \emph{2014 52nd Annual Allerton Conference on Communication, Control, and
  Computing (Allerton)}, Sep. 2014, pp. 987--995.

\bibitem{Devroye_ISIT15}
Y.~{Chen} and N.~{Devroye}, ``On the optimality of colour-and-forward relaying
  for a class of zero-error primitive relay channels,'' in \emph{2015 IEEE
  International Symposium on Information Theory (ISIT)}, June 2015, pp.
  1272--1276.

\bibitem{Witsenhausen76}
H.~{Witsenhausen}, ``The zero-error side information problem and chromatic
  numbers (corresp.),'' \emph{IEEE Transactions on Information Theory},
  vol.~22, no.~5, pp. 592--593, Sep. 1976.

\bibitem{Devroye_ISIT18}
M.~{Asadi}, K.~{Palacio-Baus}, and N.~{Devroye}, ``A relaying graph and special
  strong product for zero-error problems in primitive relay channels,'' in
  \emph{2018 IEEE International Symposium on Information Theory (ISIT)}, June
  2018, pp. 281--285.

\bibitem{Devroye_Allerton16}
N.~{Devroye}, ``When is the zero-error capacity positive in the relay,
  multiple-access, broadcast and interference channels?'' in \emph{2016 54th
  Annual Allerton Conference on Communication, Control, and Computing
  (Allerton)}, Sep. 2016, pp. 672--678.

\bibitem{Devroye_Allerton17}
M.~{Asadi} and N.~{Devroye}, ``On the zero-error capacity of channels with
  noisy feedback,'' in \emph{2017 55th Annual Allerton Conference on
  Communication, Control, and Computing (Allerton)}, Oct 2017, pp. 642--649.

\bibitem{ZhaoPermuter_IT10}
L.~{Zhao} and H.~H. {Permuter}, ``Zero-error feedback capacity of channels with
  state information via dynamic programming,'' \emph{IEEE Transactions on
  Information Theory}, vol.~56, no.~6, pp. 2640--2650, June 2010.

\bibitem{Wang_ISIT17}
L.~{Wang} and O.~{Shayevitz}, ``Graph information ratio,'' in \emph{2017 IEEE
  International Symposium on Information Theory (ISIT)}, June 2017, pp.
  913--917.

\bibitem{wang2017graph}
L.~Wang and O.~Shayevitz, ``Graph information ratio,'' \emph{SIAM Journal on
  Discrete Mathematics}, vol.~31, no.~4, pp. 2703--2734, 2017.

\bibitem{KornerMarton2001}
J.~K\"orner and K.~Marton, ``Relative capacity and dimension of graphs,''
  \emph{Discrete Mathematics}, vol. 235, no.~1, pp. 307 -- 315, 2001.

\bibitem{Shayevitz_IT17}
S.~{Hu} and O.~{Shayevitz}, ``The $\rho$-capacity of a graph,'' \emph{IEEE
  Transactions on Information Theory}, vol.~63, no.~4, pp. 2241--2253, April
  2017.

\bibitem{OrdentlichShayevitz_ISIT15}
O.~{Ordentlich} and O.~{Shayevitz}, ``A vc-dimension-based outer bound on the
  zero-error capacity of the binary adder channel,'' in \emph{2015 IEEE
  International Symposium on Information Theory (ISIT)}, June 2015, pp.
  2366--2370.

\bibitem{UrbankeLi_ITW98}
R.~{Urbanke} and {Quinn Li}, ``The zero-error capacity region of the 2-user
  synchronous bac is strictly smaller than its shannon capacity region,'' in
  \emph{1998 Information Theory Workshop (Cat. No.98EX131)}, June 1998, pp.
  61--.

\bibitem{Wiese_TAC19}
M.~{Wiese}, T.~J. {Oechtering}, K.~H. {Johansson}, P.~{Papadimitratos},
  H.~{Sandberg}, and M.~{Skoglund}, ``Secure estimation and zero-error secrecy
  capacity,'' \emph{IEEE Transactions on Automatic Control}, vol.~64, no.~3,
  pp. 1047--1062, March 2019.

\bibitem{Wiese_ISIT16}
M.~{Wiese}, K.~H. {Johansson}, T.~J. {Oechtering}, P.~{Papadimitratos},
  H.~{Sandberg}, and M.~{Skoglund}, ``Uncertain wiretap channels and secure
  estimation,'' in \emph{2016 IEEE International Symposium on Information
  Theory (ISIT)}, July 2016, pp. 2004--2008.

\bibitem{Wiese_CDC16}
------, ``Secure estimation for unstable systems,'' in \emph{2016 IEEE 55th
  Conference on Decision and Control (CDC)}, Dec 2016, pp. 5059--5064.

\bibitem{RuizPerezCruz_ITW11}
F.~J.~R. {Ruiz} and F.~{Pérez-Cruz}, ``Zero-error codes for the
  noisy-typewriter channel,'' in \emph{2011 IEEE Information Theory Workshop},
  Oct 2011, pp. 495--497.

\bibitem{CullinaDalaiPolyanskiy_ISIT16}
D.~{Cullina}, M.~{Dalai}, and Y.~{Polyanskiy}, ``Rate-distance tradeoff for
  codes above graph capacity,'' in \emph{2016 IEEE International Symposium on
  Information Theory (ISIT)}, July 2016, pp. 1331--1335.

\bibitem{Dalai_ISIT17}
M.~{Dalai}, V.~G. {Carnegie}, and J.~{Radhakrishnan}, ``An improved bound on
  the zero-error list-decoding capacity of the 4/3 channel,'' in \emph{2017
  IEEE International Symposium on Information Theory (ISIT)}, June 2017, pp.
  1658--1662.

\bibitem{Elias_IT88}
P.~{Elias}, ``Zero error capacity under list decoding,'' \emph{IEEE
  Transactions on Information Theory}, vol.~34, no.~5, pp. 1070--1074, Sep.
  1988.

\bibitem{Radziszowski_IT13}
X.~{Xu} and S.~P. {Radziszowski}, ``Bounds on shannon capacity and ramsey
  numbers from product of graphs,'' \emph{IEEE Transactions on Information
  Theory}, vol.~59, no.~8, pp. 4767--4770, Aug 2013.

\bibitem{wilson92}
J.~H. van Lint and R.~M. Wilson, \emph{A Course in Combinatorics},
  2nd~ed.\hskip 1em plus 0.5em minus 0.4em\relax Cambridge University Press,
  2001.

\bibitem{greene2007mathematics}
D.~H. Greene and D.~E. Knuth, \emph{Mathematics for the Analysis of
  Algorithms}.\hskip 1em plus 0.5em minus 0.4em\relax Springer Science \&
  Business Media, 2007.

\bibitem{linz06}
P.~Linz, \emph{An introduction to formal languages and automata}.\hskip 1em
  plus 0.5em minus 0.4em\relax Jones \& Bartlett Learning, 2006.

\bibitem{lang1999complex}
S.~Lang, \emph{Complex analysis}.\hskip 1em plus 0.5em minus 0.4em\relax
  Springer Science \& Business Media, 1999, vol. 103.

\bibitem{lax2002functional}
P.~Lax, \emph{Functional analysis}, ser. Pure and applied mathematics.\hskip
  1em plus 0.5em minus 0.4em\relax Wiley, 2002.
\end{thebibliography}

\bibliographystyle{IEEEtran}


\appendices
\section{Proof of Lemmas}

\subsection{Proof of Lemma \ref{lemma:Fekete} (Fekete)}

    Let $\epsilon > 0$, $L'$ be an integer such that $\frac{u_{L'}}{L'} \geq \sup_l \frac{u_l}{l} + \epsilon$. Let $L \geq L'$ and $(r,q)$ be the remainder and quotient of the euclidean division of $L$ by $L'$. Then we have $L = L'q + r$ \text{and}
\begin{equation}
    \frac{u_L}{L} \geq \frac{q u_{L'} + r}{L} \geq \frac{q L'}{L}\left( \sup_l \frac{u_l}{l} + \epsilon \right) + \frac{r}{L} \underset{L \rightarrow \infty}{\rightarrow} \sup_l \frac{u_l}{l} + \epsilon.
\end{equation}
Thus the limit of $\left(\frac{u_L}{L}\right)_{L \in \mathbb{N}}$ equals its supremum.

\subsection{Proof of Lemma \ref{lemma:FewChannelUses}}

For $L = 2$, the average number of transmitted symbols per channel use of the generated fixed length code is $\sqrt{6} \simeq 2.449$ :
\begin{equation}
\begin{gathered}
     \#\mathcal{C}^*_{[L]} = \#\mathcal{C}^*_{[2]} = \#\lbrace 00, 11, 23, 35, 42, 54 \rbrace = 6.
\end{gathered}
\end{equation}

For $L = 3$, the average number of transmitted symbols per channel use of the generated fixed length code is $\sqrt[3]{11} \simeq 2.224$ : 
\begin{equation}
\begin{gathered}
     \#\mathcal{C}^*_{[L]} = \#\mathcal{C}^*_{[3]} \\
     = \#\lbrace 000 \rbrace \cup \lbrace 011, 023, 035, 042, 054 \rbrace \\
      \cup \lbrace 110, 230, 350, 420, 540 \rbrace = 11.
\end{gathered}
\end{equation}

For $L = 4$, the average number of transmitted symbols per channel use of the generated fixed length code is $\sqrt[4]{41} \simeq 2.530$ :
\begin{equation}
\begin{gathered}
    \#\mathcal{C}^*_{[L]} = \#\mathcal{C}^*_{[4]} \\
    = \# \lbrace 0011, 0023, 0035, 0042, 0054 \rbrace \\
    \cup \lbrace 0110, 0230, 0350, 0420, 0540 \rbrace \\
    \cup \lbrace 1100, 2300, 3500, 4200, 5400 \rbrace \\
    \cup \lbrace 11, 23, 35, 42, 54 \rbrace^2 \cup \lbrace 0000 \rbrace \\
    = 41.
\end{gathered}
\end{equation}

For $L = 5$, the average number of transmitted symbols per channel use of the generated fixed length code is $\sqrt[5]{96} \simeq 2.491$ :
\begin{equation}
\begin{gathered}
    \#\mathcal{C}^*_{[L]} = \#\mathcal{C}^*_{[5]} \\
    = \#\lbrace 0 \rbrace^5 + \#\lbrace 0 \rbrace^3 \times [5] + \#\lbrace 0 \rbrace^2 \times [5] \times \lbrace 0 \rbrace \\
    + \#\lbrace 0 \rbrace \times [5] \times \lbrace 0 \rbrace^2 + \#\lbrace 0 \rbrace \times [5]^2 + \# [5] \times \lbrace 0 \rbrace^3 \\ 
    +\#[5] \times \lbrace 0 \rbrace \times [5] + \#[5]^2 \times \lbrace 0 \rbrace \\
    = 1 + 5 + 5 + 5 + 25 + 5 + 25 + 25 = 96.
\end{gathered}
\end{equation}

\subsection{Proof of Lemma \ref{lemma:FewChannelUses2}}

For $L = 3$, the average number of transmitted symbols per channel use of the generated fixed length code is $\sqrt[3]{2} \simeq 1.260$ :
\begin{equation}
    \begin{gathered}
        \#\mathcal{C}'^*_{[L]} = \#\mathcal{C}'^*_{[3]} = \#\mathcal{C'} = 2.
    \end{gathered}
\end{equation}

For $L = 4$, the average number of transmitted symbols per channel use of the generated fixed length code is $\sqrt[4]{25} \simeq 2.236$ :
\begin{equation}
    \begin{gathered}
        \#\mathcal{C}'^*_{[L]} = \#\mathcal{C}'^*_{[4]} = \#\lbrace 11,23,35,42,54 \rbrace^2 = 25.
    \end{gathered}
\end{equation}

For $L = 5$, the average number of transmitted symbols per channel use of the generated fixed length code is $\sqrt[5]{20} \simeq 1.821$ :
\begin{equation}
    \begin{gathered}
       \#\mathcal{C}'^*_{[L]} = \#\mathcal{C}'^*_{[5]} \\
       = \lbrace 11,23,35,42,54 \rbrace \times \lbrace 001, 003 \rbrace \\
       \cup \lbrace 001, 003 \rbrace \times \lbrace 11,23,35,42,54 \rbrace \\
        = 10 + 10 = 20.
    \end{gathered}
\end{equation}

\section{Proof of Theorem \ref{theo:VariableLength} for variable-length codes}\label{sec:ProofTheoremVariableLength}
Let us index the generator set $\mathcal{C} = \lbrace \kappa^1, ..., \kappa^{\# \mathcal{C}}\rbrace$. Then we define the directed graph $G \doteq (\mathcal{V}, \mathcal{E})$ with
\begin{align}
&\mathcal{V} = \lbrace (i,j) \: | \: \kappa^i \in \mathcal{C}, j \in \llbracket 1,|\kappa^i| \rrbracket \rbrace, \label{eq:GraphProofVL1} \\
&(i,j)(i',j') \in \mathcal{E} \quad \textit{ if }\quad        ( j = |\kappa^i| \text{ and } j' = 1 ) \text{ OR } (i = i' \text{ and } j' = j + 1 ).\label{eq:GraphProofVL2}
\end{align}
We also define the set of \textit{final} nodes by $\mathcal{F} \doteq \lbrace (i,|\kappa_i|) \:|\: i \leq \#\mathcal{C} \rbrace$.

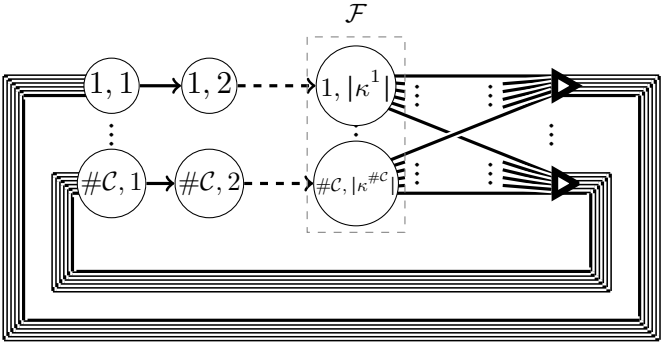
\begin{figure}[h!]
        {\centering
    \begin{tikzpicture}[scale=0.65]
        \node[draw=white, fill=white, shape=circle, scale=1, inner sep=0.3pt] (G) at (0,-1) {$\vdots$};
        \node[draw=white, fill=white, shape=circle, scale=1, inner sep=0.3pt] (H) at (5,-1) {$\vdots$};
        \node[draw=white, fill=white, shape=circle, scale=1, inner sep=0.3pt] (I) at (9,-1) {$\vdots$};
        \node[draw=white, fill=white, shape=circle, scale=1, inner sep=0.3pt] (J) at (6.25,-0.2) {$\vdots$};
        \node[draw=white, fill=white, shape=circle, scale=1, inner sep=0.3pt] (K) at (7.75,-0.2) {$\vdots$};
        \node[draw=white, fill=white, shape=circle, scale=1, inner sep=0.3pt] (L) at (6.25,-1.8) {$\vdots$};
        \node[draw=white, fill=white, shape=circle, scale=1, inner sep=0.3pt] (M) at (7.75,-1.8) {$\vdots$};
            \drawLinewithBG{5,0.2}{9,0.2};
            \drawLinewithBG{5,-2.2}{9,-2.2};
            \drawLinewithBG{5,-0.2}{9,-1.8};
            \drawLinewithBG{5,-1.8}{9,-0.2};
            \foreach \i in {0.12,0.04,-0.04,-0.12}{
             \drawLinewithBG{5,\i}{6, \i*2 -0.2};
             \drawLinewithBG{5,\i-2}{6, \i*2 -1.8};
             \drawLinewithBG{9,\i-2}{8, \i*2 -1.8};
             \drawLinewithBG{9,\i}{8, \i*2 -0.2};}
            \foreach \i in {0.20,0.12,0.04,-0.04,-0.12,-0.20}{
             \drawLinewithBG{9,\i}{11+\i, \i};
             \drawLinewithBG{11+\i, \i}{11+\i,-\i-5};
             \drawLinewithBG{11+\i,-\i-5}{-2-\i, -\i-5};
             \drawLinewithBG{9,\i-2}{10+\i, \i -2};
             \drawLinewithBG{10+\i, \i-2}{10+\i,-\i-4};
             \drawLinewithBG{10+\i,-\i-4}{-1-\i, -\i-4};
             \drawLinewithBG{-2-\i, -\i-5}{-2-\i,\i};
             \drawLinewithBG{-1-\i, -\i-4}{-1-\i,\i-2};
             \drawLinewithBG{-2-\i,\i}{0, \i};
             \drawLinewithBG{-1-\i,\i-2}{0, \i-2};}
             \Montriangle{9.5}{0}{0.5}{30};
             \Montriangle{9.5}{-2}{0.5}{30};
            
            \node[draw=black, fill=white, shape=circle, scale=1.2, inner sep=0.3pt] (A) at (0,0) {$1,1$};
            \node[draw=black, fill=white, shape=circle, scale=1.2, inner sep=0.3pt] (B)at (2,0) {$1,2$};
            \node[draw=black, fill=white, shape=circle, scale=1, inner sep=0.3pt] (C) at (5,0) {$1,|\kappa^1|$};
            \node[draw=black, fill=white, shape=circle, scale=1, inner sep=0.3pt] (D) at (0,-2) {$\#\mathcal{C},1$};
            \node[draw=black, fill=white, shape=circle, scale=1, inner sep=0.3pt] (E) at (2,-2) {$\#\mathcal{C},2$};
            \node[draw=black, fill=white, shape=circle, scale=0.7, inner sep=0.3pt] (F) at (5,-2) {$\#\mathcal{C},|\kappa^{\#\mathcal{C}}|$};
            \drawArrowwithBG{A}{B};
            \drawDashedArrowwithBG{B}{C};
            \drawArrowwithBG{D}{E};
            \drawDashedArrowwithBG{E}{F};
            \draw[dashed, draw=gray] (4,1) rectangle (6, -3);
            \node[draw=white, fill=white, shape=circle, scale=1, inner sep=0.3pt] (F) at (5,1.5) {$\mathcal{F}$};
        \end{tikzpicture}}
        \caption{The transition graph $G$ defined by \eqref{eq:GraphProofVL1} and \eqref{eq:GraphProofVL2}.}
        \label{fig:my_label}
    \end{figure}

    We recall that in a directed graph $G = (\mathcal{V}, \mathcal{E})$ with adjacency matrix $M_G$, the number of paths of length $L$ from a vertex $v$ to another vertex $v'$ is given by $(M_G^L)_{vv'}$. Thus the number of paths of length $L$ from the vertex $(1, |\kappa^1|)$ to the set $\mathcal{F}$ in $G$ is given by
    \begin{equation}
     \sum_{v \in \mathcal{F}} (M_G^L)_{(1, |\kappa^1|), v} = \left\langle M_G^L, I_{(1, |\kappa^1|), \mathcal{F}}\right\rangle,
    \end{equation}
    where $I_{(1, |\kappa^1|), \mathcal{F}} \doteq (\mathds{1}_{v = (1, |\kappa^1|) \text{ and } v' \in \mathcal{F}})_{v, v' \in \mathcal{V}}$.
    
    By construction, the number of distinct words of length $L$ that can be achieved by concatenation of the elements from the generator set $\mathcal{C}$, is equal to the number of paths of length $L$ from the vertex $(1, |\kappa^1|)$ to the set $\mathcal{F}$ in $G$. Indeed, each branch $(i,1) \rightarrow (i,2) \rightarrow ... \rightarrow (i, |\kappa^i|)$ corresponds to the transmission of the word $\kappa^i$, and each path from $(1, |\kappa^1|)$ to the set $\mathcal{F}$ is a succession of such branches. Thus the number of channel input sequences satisfies
    \begin{equation}
        \#\mathcal{C}^*_{[L]} = \sum_{v \in \mathcal{F}} (M_G^L)_{(1, |\kappa^1|),v} = \left\langle M_G^L, I_{(1, |\kappa^1|), \mathcal{F}}\right\rangle.
    \end{equation}

\begin{lemma}\label{lemma:Spectrum}
$\nu(\mathcal{C}) = \max_{i} |\lambda_i(M_G)|$, where $\big(\lambda_i(M_G)\big)_{i \leq \#\mathcal{V}}$ are the elements of the spectrum of $M_G$.
\end{lemma}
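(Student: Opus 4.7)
The plan is to identify $\nu(\mathcal{C})$ with the spectral radius $\rho(M_G) \doteq \max_i|\lambda_i(M_G)|$ by sandwiching $\#\mathcal{C}^*_{[L]}$ between two quantities, both of which grow asymptotically as $\rho(M_G)^L$. The backbone is Gelfand's formula for any operator norm, $\rho(M_G) = \lim_{L\to\infty} \|M_G^L\|^{1/L}$, which will be applied to the maximum-entry norm $\|M\|_{\infty} \doteq \max_{v,v'}|M_{v,v'}|$ (equivalent to any other matrix norm up to a factor depending only on $\#\mathcal{V}$).

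The upper bound $\nu(\mathcal{C}) \le \rho(M_G)$ is straightforward from the identity already derived in the proof, namely
\begin{equation}
    \#\mathcal{C}^*_{[L]} = \sum_{v \in \mathcal{F}} (M_G^L)_{(1,|\kappa^1|), v} \le \#\mathcal{F}\cdot \|M_G^L\|_{\infty}.
\end{equation}
Taking $L$-th roots and passing to the limit yields $\nu(\mathcal{C}) \le \rho(M_G)$ by Gelfand.

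For the reverse inequality I will bound an arbitrary entry $(M_G^L)_{v,v'}$ by a counted quantity $\#\mathcal{C}^*_{[L+c]}$ with $c$ bounded by $2\overline{l}$. The key structural observation is that in the graph $G$ defined by \eqref{eq:GraphProofVL1}--\eqref{eq:GraphProofVL2}, every node $(i,j)$ is reachable from $(1,|\kappa^1|)$ in exactly $j$ edges (via $(1,|\kappa^1|)\to(i,1)\to\cdots\to(i,j)$) and reaches the final node $(i,|\kappa^i|) \in \mathcal{F}$ in exactly $|\kappa^i|-j$ edges. Fix once and for all, for every pair $(v,v')$, a prefix path $\pi_1(v)$ from $(1,|\kappa^1|)$ to $v$ of some length $\ell_1(v) \le \overline{l}$, and a suffix path $\pi_2(v')$ from $v'$ to an element of $\mathcal{F}$ of length $\ell_2(v') \le \overline{l}$. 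Concatenating $\pi_1(v)$, any path of length $L$ from $v$ to $v'$, and $\pi_2(v')$ produces a distinct path from $(1,|\kappa^1|)$ to $\mathcal{F}$ of length $L+\ell_1(v)+\ell_2(v')$. This gives
\begin{equation}
    (M_G^L)_{v,v'} \le \#\mathcal{C}^*_{[L + \ell_1(v)+\ell_2(v')]}, \qquad \text{hence } \|M_G^L\|_{\infty} \le \sum_{k=0}^{2\overline{l}} \#\mathcal{C}^*_{[L+k]}.
\end{equation}
Raising to the power $1/L$ and using $\nu(\mathcal{C}) = \lim_L (\#\mathcal{C}^*_{[L]})^{1/L}$ (where defined via Fekete), the bounded shift $k \le 2\overline{l}$ disappears asymptotically, so Gelfand yields $\rho(M_G) \le \nu(\mathcal{C})$, and the two inequalities combine into the claim.

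The main subtlety will be the case $d \doteq \gcd(|c|,\, c \in \mathcal{C}) \ne 1$, where $\#\mathcal{C}^*_{[L]}$ vanishes whenever $d \nmid L$ and $\nu(\mathcal{C})$ is defined by the subsequence in \eqref{eq:ratePGCD}; in that regime some paths in $G$ between arbitrary vertices may nevertheless exist at non-multiples of $d$, and the sum over $k\in\llbracket 0,2\overline{l}\rrbracket$ above only needs to contain one index making $L+k$ a multiple of $d$ (which is guaranteed since $d \le \overline{l}$) for the asymptotic sandwich to remain valid. Once this book-keeping is carried out the same limit argument goes through, and the lemma follows.
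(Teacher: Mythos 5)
Your proposal is correct, and it reaches the conclusion by a genuinely different route than the paper on the delicate half of the argument. Both you and the paper start from the identity $\#\mathcal{C}^*_{[L]} = \sum_{v \in \mathcal{F}} (M_G^L)_{(1,|\kappa^1|),v}$ and both invoke Gelfand's formula (your use of the max-entry norm is legitimate since, as you note, norm equivalence in finite dimension makes $\lim_L \|M_G^L\|^{1/L}$ norm-independent). The difference is in how the non-dominant factor is controlled. The paper writes $\#\mathcal{C}^*_{[L]} = \|M_G^L\| \cdot \bigl\langle M_G^L/\|M_G^L\|, I_{(1,|\kappa^1|),\mathcal{F}}\bigr\rangle$, bounds the normalized inner product above by Cauchy--Schwarz, and argues it cannot tend to zero ``because the code has a positive rate''; the $L$-th root of this bounded factor then tends to $1$. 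You instead prove a genuine two-sided sandwich: the easy bound $\#\mathcal{C}^*_{[L]} \leq \#\mathcal{F}\,\|M_G^L\|_\infty$ for one direction, and for the other you exploit the strong connectivity of $G$ (every vertex is reached from $(1,|\kappa^1|)$ in at most $\overline{l}$ steps and reaches $\mathcal{F}$ in at most $\overline{l}$ steps) to splice fixed connector paths around any $v \to v'$ path, giving $(M_G^L)_{v,v'} \leq \#\mathcal{C}^*_{[L+k]}$ with $k \leq 2\overline{l}$, so the bounded shift washes out in the $L$-th root. What your approach buys is that the lower bound $\max_i|\lambda_i(M_G)| \leq \nu(\mathcal{C})$ becomes purely combinatorial and self-contained, avoiding the paper's somewhat informal appeal to positivity of the rate (which, strictly speaking, only rules out convergence of the inner product to zero, not sub-exponential decay, whereas your splicing argument needs no such discussion); what it costs is a little extra book-keeping, notably in the $\gcd(|c|, c\in\mathcal{C}) = d \neq 1$ case, which you correctly flag and which resolves as you describe since all path lengths between a fixed pair of vertices are congruent modulo $d$, so the only nonzero terms in your sum are those on the subsequence defining $\nu(\mathcal{C})$.
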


\begin{proof}[Lemma \ref{lemma:Spectrum}]
We denote by $\|\cdot\|$, the euclidean norm for matrices. If $\gcd(|c|,\, c \in \mathcal{C}) = 1$ we have 
\begin{align}
    \sqrt[L]{\#\mathcal{C}^*_{[L]}} =& \sqrt[L]{\left\langle M_G^L, I_{(1, |\kappa^1|), \mathcal{F}}\right\rangle}\\
    =& \sqrt[L]{\|M_G^L\|} \sqrt[L]{ \left\langle \frac{M_G^{L}}{\|M_G^L\|}, I_{(1, |\kappa^1|), \mathcal{F}}\right\rangle}.
\end{align}
We use Gelfand formula 
    \begin{equation}
        \sqrt[L]{\|M_G^L\|} \underset{L \rightarrow \infty}{\rightarrow} \max_{i} |\lambda_i(M_G)|,
    \end{equation}
where $\big(\lambda_i(M_G)\big)_{i \leq \#\mathcal{V}}$ are the elements of the spectrum of $M_G$, see \cite[Theorem 4, pp. 195]{lax2002functional}.

Now let us show that there exists two positive constants $\overline{m}$ and $\underline{m}$ such that for all $L$ large enough, $\underline{m} \leq \left\langle \frac{M_G^{L}}{\|M_G^L\|}, I_{(1, |\kappa^1|), \mathcal{F}}\right\rangle \leq \overline{m}$. The existence of $\overline{m}$ is given by Cauchy-Schwarz inequality :
\begin{equation}
\left\langle \frac{M_G^{L}}{\|M_G^L\|}, I_{(1, |\kappa^1|), \mathcal{F}}\right\rangle \leq
\|I_{(1, |\kappa^1|), \mathcal{F}}\| = \overline{m}.
\end{equation}
    
Since we assumed the code has a positive rate, $\left\langle \frac{M_G^{L}}{\|M_G^L\|}, I_{(1, |\kappa^1|), \mathcal{F}}\right\rangle$ cannot converge to 0, as $\sqrt[L]{\|M_G^L\|}$ is asymptotically bounded. Thus the value $\underline{m}$ exists. Therefore $\sqrt[L]{ \left\langle \frac{M_G^{L}}{\|M_G^L\|}, I_{(1, |\kappa^1|), \mathcal{F}}\right\rangle}$ converges to 1 when $L$ goes to infinity. Now we have
\begin{equation}
        \sqrt[L]{\#\mathcal{C}^*_{[L]}} \underset{L \rightarrow \infty}{\rightarrow} \max_{i} |\lambda_i(M_G)|.
    \end{equation}
This proof can be straigtforwardly adapted to the case where $\gcd(|c|,\, c \in \mathcal{C}) = d \neq 1$ by taking $dL$ instead of $L$.
\end{proof}

\begin{lemma}\label{lemma:Delta}
There exists a unique positive number $\delta$ such that $\sum_{l = \underline{l}}^{\overline{l}} \#\mathcal{C}_{[l]}\delta^{-l} = 1$.
\end{lemma}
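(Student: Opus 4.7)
The plan is to reduce this to a standard monotonicity and intermediate-value argument on the real function
\[
f : x \longmapsto \sum_{l = \underline{l}}^{\overline{l}} \#\mathcal{C}_{[l]}\, x^{-l}, \qquad x > 0.
\]
The equation $\sum_{l = \underline{l}}^{\overline{l}} \#\mathcal{C}_{[l]}\delta^{-l} = 1$ is precisely $f(\delta) = 1$, so it suffices to show that $f$ takes the value $1$ exactly once on $(0, \infty)$.

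First I would record the basic properties of $f$. Since each coefficient $\#\mathcal{C}_{[l]}$ is a nonnegative integer and, by the definition of $\underline{l}$ and $\overline{l}$ as the minimal and maximal lengths of words in $\mathcal{C}$, both $\#\mathcal{C}_{[\underline{l}]} \geq 1$ and $\#\mathcal{C}_{[\overline{l}]} \geq 1$, the function $f$ is a nontrivial positive linear combination of the strictly decreasing, continuous maps $x \mapsto x^{-l}$. Hence $f$ is continuous and strictly decreasing on $(0, \infty)$.

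Next I would compute the boundary behavior. As $x \to 0^+$, the dominant term $\#\mathcal{C}_{[\overline{l}]}\, x^{-\overline{l}}$ diverges, so $f(x) \to +\infty$. As $x \to +\infty$, every term $\#\mathcal{C}_{[l]}\, x^{-l}$ with $l \geq \underline{l} \geq 1$ tends to $0$, so $f(x) \to 0$. By the intermediate value theorem applied to the continuous function $f$, there exists $\delta \in (0, \infty)$ with $f(\delta) = 1$, and strict monotonicity guarantees uniqueness.

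There is no real obstacle here: the only subtlety is to make sure the hypotheses ensure that $f$ is genuinely nonzero and genuinely decreasing, which follows from the existence of at least one word in $\mathcal{C}$ (equivalently, $\#\mathcal{C}_{[\underline{l}]} \geq 1$). Equivalently, one may observe that multiplying $f(x) = 1$ by $x^{\overline{l}}$ yields the characteristic polynomial $P(X) = X^{\overline{l}} - \sum_{l=\underline{l}}^{\overline{l}} \#\mathcal{C}_{[l]} X^{\overline{l}-l}$ from \eqref{eq:CharacteristicPolynomial}, whose nonleading coefficients are all nonpositive with at least one strictly negative; Descartes' rule of signs then gives exactly one positive real root, recovering the same conclusion. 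This value $\delta$ will of course coincide with $\nu(\mathcal{C})$ in the proof of Theorem~\ref{theo:VariableLength}.
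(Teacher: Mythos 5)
Your argument is correct and matches the paper's own proof: both define the function $f(x) = \sum_{l=\underline{l}}^{\overline{l}} \#\mathcal{C}_{[l]} x^{-l}$, observe it is continuous and strictly decreasing on $(0,\infty)$ with limits $+\infty$ at $0^+$ and $0$ at $+\infty$, and conclude existence and uniqueness of $\delta$ by the intermediate value theorem and monotonicity. The Descartes' rule of signs remark is a pleasant extra but not needed; nothing further to add.
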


\begin{proof}[Lemma \ref{lemma:Delta}]  
The following function
\begin{align}
f : \mathbb{R_*^+} \rightarrow& \mathbb{R_*^+}, \\
x \mapsto& \sum_{l = \underline{l}}^{\overline{l}} \#\mathcal{C}_{[l]}x^{-l},
\end{align}
is strictly decreasing on $\mathbb{R}_*^+$, tends to $+\infty$ on $0^+$ and goes to $0^+$ on $+\infty$ : there exists a unique $\delta$ such that $\sum_{l = \underline{l}}^{\overline{l}} \#\mathcal{C}_{[l]}\delta^{-l} = 1$. Note that $\delta \geq 1$ as $(\#\mathcal{C}_{[l]})_{l \leq \overline{l}}$ are integers with at least one nonzero term, i.e. $\overline{l} \neq 0$.
\end{proof}

\begin{lemma}\label{lemma:UniquePositiveSol}
The value  $\max_{i} |\lambda_i(M_G)|$ is the unique positive solution of $X^{\overline{l}} = \sum_{l = \underline{l}}^{\overline{l}} \#\mathcal{C}_{[l]} X^{\overline{l}-l}$.
\end{lemma}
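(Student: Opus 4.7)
The plan is to identify the unique positive root from Lemma \ref{lemma:Delta} with the spectral radius of $M_G$. First, I would rewrite the polynomial equation $X^{\overline{l}} = \sum_{l=\underline{l}}^{\overline{l}} \#\mathcal{C}_{[l]} X^{\overline{l}-l}$ as $\sum_{l=\underline{l}}^{\overline{l}} \#\mathcal{C}_{[l]} X^{-l} = 1$ (which is valid for $X\neq 0$), so by Lemma \ref{lemma:Delta} its unique positive solution is $\delta$. It then suffices to show $\max_i |\lambda_i(M_G)| = \delta$.

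Next, I would compute the nonzero spectrum of $M_G$ directly from the chain-plus-shared-sink structure in \eqref{eq:GraphProofVL1}–\eqref{eq:GraphProofVL2}. Fix a nonzero eigenvalue $\lambda$ with eigenvector $v$. Along the chain $(i,1) \to (i,2) \to \cdots \to (i,|\kappa^i|)$, the relation $M_G v = \lambda v$ propagates as $v_{(i,j+1)} = \lambda\, v_{(i,j)}$, whereas at a terminal vertex it reads $\lambda\,v_{(i,|\kappa^i|)} = S$, with $S \doteq \sum_{i'} v_{(i',1)}$. Combining, $\lambda^{|\kappa^i|}\, v_{(i,1)} = S$ for every $i$. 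If $S = 0$ and $\lambda \neq 0$, then $v_{(i,1)}=0$ and the chain recursion forces $v\equiv 0$, contradicting that $v$ is an eigenvector. So $S \neq 0$, and summing the identity $v_{(i,1)} = S\lambda^{-|\kappa^i|}$ over $i$ yields $\sum_{l} \#\mathcal{C}_{[l]} \lambda^{-l} = 1$, i.e., $\lambda$ is a root of the characteristic polynomial. Conversely, every nonzero root gives rise to an explicit eigenvector via $v_{(i,1)} = \lambda^{-|\kappa^i|}$ and $v_{(i,j)} = \lambda^{j-1} v_{(i,1)}$. Thus the set of nonzero eigenvalues of $M_G$ coincides exactly with the set of nonzero roots of \eqref{eq:CharacteristicPolynomial}, and in particular $\delta$ is itself an eigenvalue.

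Finally, I would bound any such root by $\delta$. For any nonzero root $\lambda$, applying the triangle inequality gives
\begin{equation*}
1 = \Bigl|\sum_{l=\underline{l}}^{\overline{l}} \#\mathcal{C}_{[l]} \lambda^{-l}\Bigr| \leq \sum_{l=\underline{l}}^{\overline{l}} \#\mathcal{C}_{[l]} |\lambda|^{-l} = f(|\lambda|),
\end{equation*}
where $f$ is the strictly decreasing function on $\mathbb{R}_*^+$ from the proof of Lemma \ref{lemma:Delta}. Since $f(\delta) = 1$, monotonicity forces $|\lambda| \leq \delta$. Combined with the fact that $\delta$ is itself a nonzero eigenvalue of $M_G$, this yields $\max_i |\lambda_i(M_G)| = \delta$.

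The main obstacle is the eigenvector analysis that pins down the nonzero spectrum of $M_G$, especially handling the degenerate case $S = 0$; once that is established, the modulus bound via triangle inequality and the monotonicity of $f$ is routine.
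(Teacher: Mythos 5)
Your proposal is correct and follows essentially the same route as the paper: analyze the eigenvalue equation along each chain and at the shared terminal-to-start edges to show every nonzero eigenvalue satisfies $\sum_l \#\mathcal{C}_{[l]}\lambda^{-l}=1$, bound moduli by the triangle inequality together with the monotonicity of $f$ from Lemma \ref{lemma:Delta}, and exhibit the explicit eigenvector $\big(\delta^{\,j-|\kappa^i|}\big)_{(i,j)}$ showing $\delta$ itself is an eigenvalue. Your handling of the $S=0$ degenerate case is the same observation the paper makes (that $\sum_i \nu_{(i,1)}=0$ would force $\nu=0$), just spelled out more explicitly.
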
 

\begin{proof}[Lemma \ref{lemma:UniquePositiveSol}]
Let $\nu$ be an eigenvector for an eigenvalue $\lambda \neq 0$ of $M_G$. They must satisfy : $M_G \nu = \lambda \nu$. Thus for all $i \leq \#\mathcal{C},j < |\kappa^i|$, 
\begin{align}
\lambda \nu_{(i,j)} =& (M_G \nu)_{(i,j)} = \sum_{k \in \mathcal{V}} (M_G)_{(i,j), k}\nu_k \\
=& \sum_{k \in \mathcal{V}} \mathds{1}_{k = (i,j+1)}\nu_k = \nu_{(i, j+1)}, 
\end{align}
and
\begin{align}
\quad \lambda \nu_{(i,|\kappa^i|)} =& (M_G \nu)_{(i,|\kappa^i|)} = \sum_{k \in \mathcal{V}} (M_G)_{(i,|\kappa^i|), k}\nu_k \\
=& \sum_{k \in \mathcal{V}} \mathds{1}_{k = (\cdot,1)}\nu_k = \sum_{i' \leq \#\mathcal{C}}\nu_{(i', 1)}.
\end{align}
This gives us the equation 
\begin{equation}
\sum_{i \leq \#\mathcal{C}}\nu_{(i, 1)} = \sum_{i \leq \#\mathcal{C}} \lambda^{-|\kappa^{i}|+1} \nu_{(i, |\kappa^{i}|)} = \sum_{i \leq \#\mathcal{C}} \lambda^{-|\kappa^{i}|} \sum_{i' \leq \#\mathcal{C}} \nu_{(i', 1)}.
\end{equation}
Since $\sum_{i \leq \#\mathcal{C}}\nu_{(i, 1)} = 0$ would imply $\nu = 0$, $\lambda$ must satisfy the polynomial equation 
\begin{equation}
1 = \sum_{i \leq \#\mathcal{C}} \lambda^{-|\kappa^i|} = \sum_{l \leq \overline{l}} \#\mathcal{C}_{[l]}\lambda^{-l}.
\end{equation}
Thus for all eigenvalue $\lambda \neq 0$ we have
\begin{equation}
1 \leq \sum_{l \leq \overline{l}} \#\mathcal{C}_{[l]}|\lambda|^{-l}.
\end{equation}
There exists a unique positive real solution $\delta$ of $1 = \sum_{l \leq \overline{l}} \#\mathcal{C}_{[l]}X^{-l}$, which is given by Lemma \ref{lemma:Delta} ; if $\delta$ is an eigenvalue of $M_G$ then it has the maximum modulus, since the equality is reached with the above modulus inequality.
    
Now we define $\nu_{\max} \doteq (\delta^{j-|\kappa^i|})_{(i,j) \in \mathcal{V}}$, let us show that $\nu_{\max}$ is an eigenvector for the eigenvalue $\delta$. Let $i \leq \#\mathcal{C}, j < |\kappa_i|$, then
\begin{equation}
\begin{gathered}
(M_G \nu_{\max})_{(i,j)} \underset{[j < |\kappa_i|]}{=} (\nu_{\max})_{(i,j+1)} = \delta^{j+1-|\kappa^i|} = \delta (\nu_{\max})_{(i,j)}.
\end{gathered}
\end{equation}
    and
\begin{align}
(M_G \nu_{\max})_{(i,|\kappa^i|)} =& \sum_{i' \leq \#\mathcal{C}} (\nu_{\max})_{(i',1)} = \delta\sum_{i' \leq \#\mathcal{C}} \delta^{-|\kappa^{i'}|} \\
=& \delta \sum_{i = \underline{l}}^{\overline{l}} \#\mathcal{C}_{[i]} \delta^{-i} \underset{[\text{Lemma 2}]}{=} \delta = \delta \cdot \delta^{|\kappa^i|-|\kappa^i|} = \delta (\nu_{\max})_{(i,|\kappa^i|)}.
\end{align}
    
The condition $M_G\nu_{\max} = \delta\nu_{\max}$ is verified on each vertex, i.e. $\nu_{\max}$ is an eigenvector for the eigenvalue $\delta$. Thus $\delta = \max_i |\lambda_i(M_G)|$ and it is the unique positive solution of $X^{\overline{l}} = \sum_{l = \underline{l}}^{\overline{l}} \#\mathcal{C}_{[l]} X^{\overline{l}-l}$.
\end{proof}
    
The desired result follows directly from lemmas \ref{lemma:Spectrum} and  \ref{lemma:UniquePositiveSol}.

\section{Proof of Theorem \ref{theo:Intermingled} for intermingled codes}\label{sec:ProofTheoremIntermingled}
The following proof holds when Fekete's lemma can be applied directly to the sequence  $\big(\log(\#\mathcal{S}_L)\big)_{L \in \mathbb{N}}$, that is when $\#\mathcal{S}_L \neq 0$ for all $L$ large enough, but the proof can be straightforwardly adapted to the other cases by defining the rate as in \eqref{eq:ratePGCD}.

Let $(\mathcal{C}, \rho)$ be an intermingled code over the channel $W$ and $G = (\mathcal{V}, \mathcal{E})$ the transition graph of the transmission states. Let $M_G$ be the adjacency matrix of $G$. Similarly to the previous section, we recall that for all $L$, the number of paths from $v$ to $v'$ of length $L$ is given by $(M^L_G)_{vv'}$. Now by construction, $\#\mathcal{S}_L$ is equal to the number of paths starting from $(0,..., 0)$ of length $L$ and finishing at $(0,..., 0)$, that is
\begin{equation}
    \#\mathcal{S}_L = (M^L_G)_{(0,...,0)(0,...,0)} = \langle M^L_G, I_{0,0}\rangle,
\end{equation}
with $I_{0,0} = (\mathds{1}_{v = (0,...,0)\text{ and }v' = (0,...,0)})_{v,v' \in \mathcal{V}}$.

Now we can proceed similarly as in the proof of Lemma \ref{lemma:Spectrum} in  App. \ref{sec:ProofTheoremVariableLength}.
\begin{align}
    r(\mathcal{C},\rho) =& \lim\limits_{L \rightarrow \infty} \frac{1}{L} \log \#\mathcal{S}_L \\
    =& \lim\limits_{L \rightarrow \infty} \frac{1}{L} \log \langle M^L_G, I_{0,0}\rangle \\
    =& \lim\limits_{L \rightarrow \infty} \log\left( \sqrt[L]{\|M^L_G\|} \right) + \frac{1}{L}\log\left(  \left\langle \frac{M^L_G}{\|M^L_G\|}, I_{0,0}\right\rangle \right) .
\end{align}

The quantity $\langle \frac{M^L_G}{\|M^L_G\|}, I_{0,0}\rangle$ is positive, bounded as $L$ goes to infinity by Cauchy-Schwartz inequality, and does not converge to zero because of the positive rate. Thus $\frac{1}{L}\log\left(  \left\langle \frac{M^L_G}{\|M^L_G\|}, I_{0,0}\right\rangle\right) \rightarrow 0$ when $L$ goes to infinity and by using Gelfand formula, $\log\left( \sqrt[L]{\|M^L_G\|} \right)$ converges to $\log \max_i |\lambda_i(M_G)|$ where $(\lambda_i)_{i \leq \#\mathcal{V}}$ are the elements of the spectrum of $M_G$.

\section{Proof of Proposition \ref{prop:RecursiveComputation}}\label{sec:ProofPropGeneratorSeries}

For all regular expressions $E$ and $E'$ such that $\mathcal{L}(E) \cap \mathcal{L}(E') = \emptyset$, we have
\begin{equation}
\begin{gathered}
    F_{E+E'}(z) = \sum_{l \in \mathbb{N}} \#\big(\mathcal{L}(E) \cup \mathcal{L}(E')\big)_{[l]}z^l \underset{[\text{Hyp.}]}{=} \sum_{l \in \mathbb{N}} (\#\mathcal{L}(E)_{[l]} + \#\mathcal{L}(E')_{[l]})z^l = F_{E}(z) + F_{E'}(z) .
\end{gathered}
\end{equation}

For all regular expressions $E$ and $E'$ we have
\begin{equation}
    \begin{gathered}
        F_{EE'}(z) = \sum_{l \in \mathbb{N}} \#\big(\mathcal{L}(E) \cdot \mathcal{L}(E')\big)_{[l]}z^l  = \sum_{l' \in \mathbb{N}} \sum_{l'' \in \mathbb{N}} \#\mathcal{L}(E)_{[l']} \#\mathcal{L}(E')_{[l'']}z^{l'+l''} = F_{E}(z)F_{E'}(z).
    \end{gathered}
\end{equation}

For all regular expression $E$, let us note $E^l \doteq E \cdot ... \cdot E$ $l$ times. Assume that the sets $(\mathcal{L}(E^l))_{l \in \mathbb{N}}$ are disjoint, then we have
\begin{align}
        F_{E^*}(z) =& \sum_{l \in \mathbb{N}} \#\mathcal{L}(E^*)_{[l]}z^l \underset{[\text{Hyp.}]}{=} \sum_{l \in \mathbb{N}} \left(\sum_{l' \in \mathbb{N}} \#\mathcal{L}(E^{l'})_{[l]} \right) z^l \\ 
        =& \sum_{l' \in \mathbb{N}} \left(\sum_{l \in \mathbb{N}} \#\mathcal{L}(E^{l'})_{[l]} z^l \right) = \sum_{l' \in \mathbb{N}} F_{E^{l'}}(z) = \sum_{l' \in \mathbb{N}} F_{E}(z)^{l'} = \frac{1}{1-F_{E}(z)}.
\end{align}

\section{Proof of Theorem \ref{theo:RationalCodes}}\label{sec:ProofTheoRationalCode}

The following proof holds when Fekete's lemma can be applied directly to $\big(\log(\#\mathcal{L}(E)_{[L]})\big)_{L \in \mathbb{N}}$, that is when $\#\mathcal{L}(E)_{[L]} \neq 0$ for all $L$ large enough, but the proof can be straightforwardly adapted to the other cases by redefining the rate as in \eqref{eq:ratePGCDrationalcode}.

Let $E$ be a rational code, $A = (\mathcal{X}, \mathcal{S}, \tau, s_{\textit{start}}, \mathcal{S}_{\textit{accept}})$ a DFA which recognizes $\mathcal{L}(E)$ and $E'$ be the regular expression such that $E = (E')^*$. Let $G_A$ be the transition graph of the DFA $A$ and $M_A$ its adjacency matrix. Similarly to the proofs in App. \ref{sec:ProofTheoremVariableLength} and \ref{sec:ProofTheoremIntermingled}, for all $L \in \mathbb{N}$ we have 
\begin{align}
    \#\mathcal{L}(E)_{[L]} =& \sum_{s \in \mathcal{S}_{\textit{accept}}} (M_A^L)_{s_{\textit{start}}s},\\
    \nu(E) =& \lim\limits_{L \rightarrow \infty} \sqrt[L]{\#\mathcal{L}(E)_{[L]}} \\
    =& \lim\limits_{L \rightarrow \infty} \sqrt[L]{\| M_A^L \|} \sqrt[L]{\left\langle \frac{M_A^L}{\|M_A^L\|}, I_{s_{\textit{start}},\mathcal{S}_{\textit{accept}}}\right\rangle},
\end{align}
where $I_{s_{\textit{start}}, \mathcal{S}_{\textit{accept}}} = (\mathds{1}_{s = s_{\textit{start}}, s' \in \mathcal{S}_{\textit{accept}}})_{s, s' \in \mathcal{S}}$. By Gelfand's formula $\sqrt[L]{\| M_A^L \|} \underset{L \rightarrow \infty}{\rightarrow} \max_s |\lambda_s(M_A)|$, where $\big(\lambda_s(M_A)\big)_{s \in \mathcal{S}}$ is the spectrum of $M_A$. On the other hand, the quantity $\left\langle \frac{M_A^L}{\|M_A^L\|}, I_{s_{\textit{start}},\mathcal{S}_{\textit{accept}}}\right\rangle$ is positive, bounded as $L$ goes to infinity by Cauchy-Schwartz inequality, and does not converge to zero because of the positive rate. Thus $\sqrt[L]{\left\langle \frac{M_A^L}{\|M_A^L\|}, I_{s_{\textit{start}},\mathcal{S}_{\textit{accept}}}\right\rangle} \rightarrow 1$ when $L$ goes to infinity.

\end{document}